\newenvironment{proof}{\par \noindent \textbf{Proof} \\}{\hfill$\Box$}
\newcommand{\lb}{\llbracket}
\newcommand{\rb}{\rrbracket}
\newtheorem{corollary}{Corollary}
\newtheorem{definition}{Definition}
\newtheorem{proposition}{Proposition}
\newtheorem{theorem}{Theorem}
\newtheorem{lemma}{Lemma}
\newcommand{\cy}{\mathcal{C}}
\newcommand{\wc}[1]{W\hspace{-0.6mm}C_{#1}}
\newcommand{\ic}[1]{I\hspace{-0.3mm}C_{#1}}
\newcommand{\st}{\:|\:}
\newcommand{\ie}{\emph{i.e.}}
\newcommand{\eg}{\emph{e.g.}}
\journal{Discrete Mathematics}
\begin{document}
\begin{frontmatter}

\title{On two variations of identifying codes\tnoteref{t1}}

 \tnotetext[t1]{This research is
    supported by the ANR Project IDEA {\scriptsize $\bullet$} Identifying coDes in Evolving
    grAphs {\scriptsize $\bullet$} {ANR-08-EMER-007},  2009-2011.}

\author[bdx]{Olivier Delmas}
\author[gre]{Sylvain Gravier}
\author[bdx]{Mickael Montassier}
\author[gre]{Aline Parreau}
\ead{aline.parreau@ujf-grenoble.fr}

\address[bdx]{Universit\'e de Bordeaux, LaBRI, 351 
cours de la Lib\'eration, 33400 {Talence}, France}

\address[gre]{Institut Fourier (\textsc{umr} {\scriptsize 5582}), 100 rue des Maths, BP 74,
38402 {Saint-Martin d'H\`eres}, France}

\begin{abstract}
    Identifying codes have been introduced in 1998 to model
    fault-detection in multiprocessor systems. In this paper, we introduce
    two variations of identifying codes: weak
codes and light codes. They correspond to fault-detection by
successive rounds. We give exact bounds for those two definitions for the 
family of cycles.
\end{abstract}

\begin{keyword}
identifying codes \sep cycles \sep metric basis
\end{keyword}

\end{frontmatter}

\section{Introduction}
Identifying codes are dominating sets having the property that any two
vertices of the graph have distinct neighborhoods within the
identifying code. Also, they can be used to uniquely identify or
locate the vertices of a graph. Identifying codes have been introduced
in 1998 in \cite{KCL98} to model fault-detection in multiprocessor
systems.  Numerous papers already deal with identifying codes (see
\eg{} \cite{bibLobstein} for an up-to-date bibliography).  A
multiprocessor system can be modeled as a graph where vertices are
processors and edges are links between processors.  Assume now that at
most one of the processors is defective, we would like to locate it by
testing the system.  For this purpose, we select some processors
(constituting the code) and have them test their $r$-neighborhoods
(\ie{} the processors at distance at most $r$). The processor sends an
alarm if it detects a fault in its neighborhood.  We require that we
can, with these answers, tell if there is a faulty processor and, in
this case, locate it uniquely.  This corresponds exactly to finding an
identifying $r$-code of the graph of the system.

\ 

Assume now that a processor can restrict its tests to its
$i$-neighborhood for $i\in \lb 0,r \rb$. Then, we can have a detection
process by rounds: at the first step, the selected processors test
their $0$-neighborhoods, then they test their $1$-neighborhoods,
\dots, until the $r$-neighborhoods.  We stop the process when we can
locate the faulty processor.  We introduce in this paper \emph{weak
  $r$-codes} (\emph{resp.} \emph{light $r$-codes}) that will model this
process without memory, \ie{} to identify a faulty processor at the round
$i$, the supervisor does not need to remember the collected
information of the rounds $j < i$ (\emph{resp.} with memory, \ie{} to identify a faulty processor at the round
$i$, the supervisor needs to remember the collected
information of the rounds $j < i$) and study them
for the family of cycles.

\ 

Let us give some notations and definitions. We denote by $G=(V,E)$ a simple non oriented graph
having vertex set $V$ and edge set $E$. Let $x$ and $y$ be two vertices of
$G$. The \emph{distance} 
$d(x,y)$  between $x$ and $y$ is
the number of edges of a shortest path between $x$ and $y$. Let $r$ be
an integer. The \emph{ball} centered on $x$ of radius $r$, denoted by $B_r(x)$ is
defined by $B_r(x)=\{y \in V \st d(x,y)\leq r\}$.

\ 

An \emph{$r$-dominating set} of $G$ is a subset $C\subseteq V$ such that
$\cup_{c\in C}B_r(c) =V$. This means that each vertex of $G$ is at distance at most $r$ of a vertex of $C$.
We say that a subset $C\subseteq V$ \emph{$r$-separates}
$x$ and $y$ 
if and only if $B_r(x)\cap C \neq B_r(y) \cap C$ (we will also say in 
this case that ``$x$ and $y$ are separated by $C$ for radius $r$'' or
that ``$x$ is separated from $y$ by $C$ for radius $r$''). 
A set $C$ \emph{$r$-identifies} $x$ if and only if it $r$-separates $x$ 
from all the other vertices. 

\paragraph{\bf (1) Identifying $r$-code}
An \emph{identifying $r$-code} of $G$ is an $r$-dominating set $C\subseteq V$ that
$r$-identifies all the vertices:

$$
\forall x \in V, \forall y \neq x \in V, B_{r}(x)\cap C \neq B_{r}(y)\cap C 
$$

\paragraph{\bf (2) Weak $r$-code}
A \emph{weak $r$-code} of $G$ is a $r$-dominating set $C \subseteq V$ such that 
each vertex $x$ is $r_x$-identified by $C$ for some radius $r_x \in \lb 0,r \rb$:
$$
\forall x \in V, \exists r_x \in \lb 0,r \rb, s.t. \  \forall y\neq x \in V, B_{r_x}(x)\cap C \neq B_{r_x}(y)\cap C
$$

\paragraph{\bf (3) Light $r$-code}
A \emph{light $r$-code} of $G$ is a $r$-dominating set $C\subseteq V$ such that
each
 pair $(x,y)$ of vertices is $r_{xy}$-separated by $C$ for some radius $r_{xy} \in \lb 0,r \rb$:

$$
\forall x \in V, \forall y\neq x \in V, \exists r_{xy} \in \lb 0,r \rb, s.t. \ B_{r_{xy}}(x)\cap C \neq B_{r_{xy}}(y)\cap C 
$$

\bigskip

Figure \ref{fig:ex1} gives an example of a weak $2$-code of $P_5$ (elements of the code are in black, 
as in all the figures).
Indeed, vertices $v_3$ and $v_4$ are identified  for radius $0$, vertices $v_2$ and $v_5$ 
are identified for radius $1$ and vertex  $v_1$ is identified for radius $2$. 
But this code is not an identifying $2$-code of $P_5$: vertices $v_2,v_3,v_4$ and $v_5$ are not separated for radius $2$.
Figure \ref{fig:ex2} gives a light $2$-code of $P_5$ which is not a weak $2$-code: vertex $v_2$ is separated 
from vertex 
$v_1$ only for radius $0$ and for this radius, vertex $v_2$ is not separated 
from $v_3$.

\bigskip

\begin{figure}[h]
  \begin{center}
    \begin{minipage}[c]{.46\linewidth}
      \centering
      \begin{tikzpicture}[scale=1]
        \path[draw] (0,0) -- (4,0);
        \tikzstyle{every node}=[shape=circle,fill=white,draw=black,minimum
        size=0.5pt,inner
        sep=1.5pt]
        \node at (0,0) {};
        \node at (1,0) {};
        \node at (2,0) {};
        \node at (3,0) {};
        \node at (4,0) {};
        
        \tikzstyle{every node}=[]
        \node at (0,-0.5) {$v_1$};
        \node at (1,-0.5) {$v_2$};
        \node at (2,-0.5) {$v_3$};
        \node at (3,-0.5) {$v_4$};
        \node at (4,-0.5) {$v_5$};
       
        \tikzstyle{every node}=[shape=circle,fill=black,draw=black,minimum
        size=0.5pt,inner
        sep=1.5pt]
        \node at (3,0) {};
        \node at (2,0) {};
    \end{tikzpicture}
    \caption{\label{fig:ex1}A weak 2-code that is not an identifying $2$-code }
  \end{minipage} \hfill
  \begin{minipage}[c]{.46\linewidth}
    \centering
    \begin{tikzpicture}[scale=1]
      \path[draw] (0,0) -- (4,0);
        \tikzstyle{every node}=[shape=circle,fill=white,draw=black,minimum
        size=0.5pt,inner
        sep=1.5pt]
        \node at (0,0) {};
        \node at (1,0) {};
        \node at (2,0) {};
        \node at (3,0) {};
        \node at (4,0) {};
     
        \tikzstyle{every node}=[]
        \node at (0,-0.5) {$v_1$};
        \node at (1,-0.5) {$v_2$};
        \node at (2,-0.5) {$v_3$};
        \node at (3,-0.5) {$v_4$};
        \node at (4,-0.5) {$v_5$};
       
        \tikzstyle{every node}=[shape=circle,fill=black,draw=black,minimum
        size=0.5pt,inner
        sep=1.5pt]
        \node at (0,0) {};
        \node at (4,0) {};

    \end{tikzpicture}
    \caption{\label{fig:ex2}A light $2$-code that is not a weak $2$-code }
   \end{minipage}
\end{center}
\end{figure}

\bigskip
A code $C$ is said to be \emph{optimum} if its cardinality is minimum. We denote by $\ic{r}(G)$ 
(\emph{resp.} $\wc{r}(G)$, $LC_r(G)$) the cardinality of an optimum identifying (\emph{resp.} weak, light) $r$-code.
An identifying $r$-code is a weak $r$-code and a weak $r$-code is a light $r$-code. 
This implies the following inequality: $\ic{r}(G)\geq \wc{r}(G) \geq LC_r(G)$. 
For all graphs and for any $r$, there exits a weak $r$-code 
and a light $r$-code (using for instance all the vertices as the code), whereas this is not true for identifying codes.

\bigskip

Let us now give some bounds for weak codes.

\begin{theorem}
 Let $r$ and $k$ be two integers and $w_r(k)$ be the maximum order of a graph $G$ such that $G$ has a weak $r$-code of size $k$. We have: $$w_r(k)=k+r(2^k-2)$$
\end{theorem}

\begin{proof}
First, we construct a graph $H_r^k$ in the following way (see Figure~\ref{fig:borne} for $r=4$ and $k=3$).
The graph $H_r^k$ has vertex set $C\cup I_1\cup ... \cup I_r$  where 
$C=\{1,...,k\}$ and $I_{j}$ has size $2^k-2$ for $1\leq j \leq r$. 
Each vertex of $I_j$ corresponds to a non-empty strict subset of $\{1,...,k\}$.
Each vertex of $I_1$ is linked to the vertices of $C$ that form its subset, and
each vertex of $I_{j}$ for $j>1$ is linked to the vertex of $I_{j-1}$ that corresponds to the
same subset.
Furthermore, $C$ induce a clique in $H_r^k$. 
The graph $H_r^k$ has order $k+r(2^k-2)$ and one can check that $C$ is a weak $r$-code of $H_r^k$ (a vertex of $I_j$ is identified for radius $j$). So $w_r(k)\geq k+r(2^k-2)$. 

\ 


Now let $G$ be a graph and $C$ be a weak $r$-code size $k$ of $G$. Let us try to maximize the number of identified vertices for each radius $i \leq r$.
\begin{itemize}[noitemsep]
\item For radius $0$, only the $k$ vertices of $C$ can be identified.
\item For radius $1$, at most $2^k$ additionnal vertices can be identified (one for each subset of $C$). 
However, it is not possible to have all the subsets. Indeed,
all the elements of $\{B_1(c)\cap C \st c \in C\}$ cannot be used to identify a vertex not in $C$ for radius $1$.

If $2^k-1$ additional vertices are identified at radius $1$, that
means that $\{B_1(c)\cap C \st c \in C\}$ contains only one element,
which is necessarily the whole set $C$.  Then all the strict subsets
of $\{1,...,k\}$ are used to identify a vertex for radius $1$, in
particular, one vertex is identified by the emptyset and so is not
$1$-dominated by $C$.  As the set $C$ must be an $r$-dominating set,
then $r\geq 2$.  Furthermore, if we try to add a new vertex $x$ in
$G$, then necessarily, $B_1(x)\cap C=C$ and $x$ will not be identified
for any radius. So, $G$ has order $k+(2^k-1)$ and $r \geq 2$. A
contradiction with the bound $w_r(k)\geq k+2(2^k-2)$ for $r\geq 2$,
given by the construction of the graph $H_2^k$. 
It follows that at most $2^k-2$ additionnal vertices are identified
for radius $1$ in $G$.

\item For radius $2 \leq i \leq r$, using a similar process, we can show that at
most $2^k-2$ vertices are identified at round $i$.
\end{itemize}

Summing the number of identified vertices at each round, we obtain that 
$G$ has order at most $k+r(2^k-2)$.  It follows that $w_r(k)=
k+r(2^k-2)$.

\end{proof}

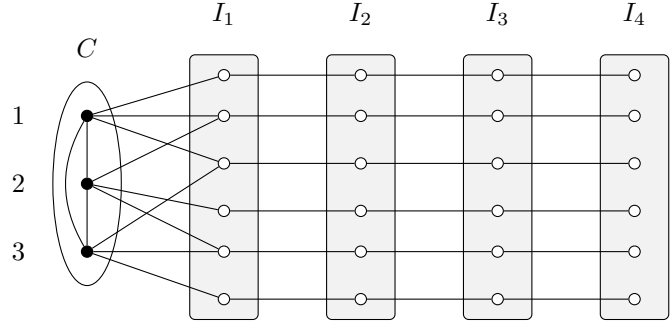
\begin{figure}[h]
\centering
    \begin{tikzpicture}[scale=0.9]

\fill[black!5,rounded corners=2pt] (1.5,0) rectangle (2.5,3.9);
\draw[rounded corners=2pt] (1.5,0) rectangle (2.5,3.9);

\fill[black!5,rounded corners=2pt] (3.5,0) rectangle (4.5,3.9);
\draw[rounded corners=2pt] (3.5,0) rectangle (4.5,3.9);

\fill[black!5,rounded corners=2pt] (5.5,0) rectangle (6.5,3.9);
\draw[rounded corners=2pt] (5.5,0) rectangle (6.5,3.9);

\fill[black!5,rounded corners=2pt] (7.5,0) rectangle (8.5,3.9);
\draw[rounded corners=2pt] (7.5,0) rectangle (8.5,3.9);

\node[] at (0,4) {$C$};
\node[] at (2,4.5) {$I_1$};
\node[] at (4,4.5) {$I_2$};
\node[] at (6,4.5) {$I_3$};
\node[] at (8,4.5) {$I_4$};
\node[] at (-1,3) {$1$};
\node[] at (-1,2) {$2$};
\node[] at (-1,1) {$3$};

\tikzstyle{every node}=[shape=circle,fill=black,draw=black,minimum
size=2pt,inner sep=1.5pt]
\node(1) at (0,3) {};
\node(2) at (0,2) {};
\node(3) at (0,1) {};

\draw (0,2) ellipse(0.5cm and 1.5cm);

\tikzstyle{every node}=[shape=circle,fill=white,draw=black,minimum
size=2pt,inner sep=1.5pt]
\node(100) at (2,3.6) {};
\node(010) at (2,1.6) {};
\node(001) at (2,0.3) {};
\node(110) at (2,3) {};
\node(011) at (2,1) {};
\node(101) at (2,2.3) {};

\node(1001) at (4,3.6) {};
\node(0101) at (4,1.6) {};
\node(0011) at (4,0.3) {};
\node(1101) at (4,3) {};
\node(0111) at (4,1) {};
\node(1011) at (4,2.3) {};

\node(1002) at (6,3.6) {};
\node(0102) at (6,1.6) {};
\node(0012) at (6,0.3) {};
\node(1102) at (6,3) {};
\node(0112) at (6,1) {};
\node(1012) at (6,2.3) {};

\node(1003) at (8,3.6) {};
\node(0103) at (8,1.6) {};
\node(0013) at (8,0.3) {};
\node(1103) at (8,3) {};
\node(0113) at (8,1) {};
\node(1013) at (8,2.3) {};

\path[draw] (1) to (100);
\path[draw] (1) to (110);
\path[draw] (1) to (101);
\path[draw] (2) to (010);
\path[draw] (2) to (110);
\path[draw] (2) to (011);
\path[draw] (3) to (001);
\path[draw] (3) to (011);
\path[draw] (3) to (101);

\path[draw] (1001) to (100);
\path[draw] (1101) to (110);
\path[draw] (1011) to (101);
\path[draw] (0101) to (010);
\path[draw] (0111) to (011);
\path[draw] (0011) to (001);
\path[draw] (1001) to (1002);
\path[draw] (1101) to (1102);
\path[draw] (1011) to (1012);
\path[draw] (0101) to (0102);
\path[draw] (0111) to (0112);
\path[draw] (0011) to (0012);

\path[draw] (1003) to (1002);
\path[draw] (1103) to (1102);
\path[draw] (1013) to (1012);
\path[draw] (0103) to (0102);
\path[draw] (0113) to (0112);
\path[draw] (0013) to (0012);

\path[draw,in=120,out=-120] (1) to (3);
\path[draw] (3) to (2);
\path[draw] (1) to (2);

    \end{tikzpicture}
    \caption{\label{fig:borne}The graph $H_4^3$ -- Extremal case for a graph with a weak $4$-code of size $3$}
  \end{figure}

\bigskip
Light $r$-codes are related to other locating notions:
a \emph{light $1$-code} is a \emph{$1$-locating dominating code} \cite{CSS87} for which we require that only
pairs  of  vertices not in the code are \emph{$1$-separated} by $C$.
The notion of light $r$-codes is a generalization of the notion of metric basis.
A subset $C$ of vertices is a \emph{metric basis} 
if every pair of vertices of the graph 
is separated by a vertex of $C$ for some radius (there is no bound on the radius). 
The \emph{metric dimension} of a graph $G$, denoted by $dim(G)$, is the cardinal of a minimum metric basis. 
A light $r$-code is a metric basis, so  $LC_r(G) \geq dim(G)$.
If $r$ is greater than the diameter of $G$, \ie{} the largest distance between two vertices of $G$, 
then a light $r$-code is exactly a metric basis.  
For a detailed review about metric basis, see \cite{CEJO00}.
As for metric basis, we do not have good bounds of the extremal size of a graph that has light $r$-codes of size $k$.

\bigskip
The optimization problems of finding optimum identifying codes \cite{CHL03} and 
optimum metric bases \cite{KRR96} are NP-complete. Finding optimum light codes is also NP-complete because if $r$ is 
larger than the diameter of the graph, then it is equivalent to metric bases.
Therefore, identifying codes and metric bases have been studied in particular classes of graphs 
(see \eg{} \cite{BCHL05, CGH06,CHHL04,HL02}).

For cycles, although metric bases problem in cycles is not difficult (the dimension of a cycle is $2$), 
the case of identifying codes is not as easy: the complete study
of cycles
has just been finished in \cite{JL09} after numerous contributions
(see \eg{} \cite{BCHL04,GMS06,XTH08}). We focus on the case of weak and light $r$-codes.

\bigskip

In this paper, we give exact value for $\wc{r}$ (Section \ref{sec:weak}) and $LC_r$ (Section \ref{sec:light}) 
for the class of cycles. In weak codes, we assign a radius to each vertex 
to separate it from other vertices whereas we can assign up to 
$r+1$ radii to a vertex with light $r$-codes. We show that $3$ radii per vertex is actually sufficient 
to separate it from all the other vertices.
We adress in Section \ref{sec:2radius} the question of the optimum size of a code requiring only 
$2$ stored radii per vertex.

\

\section{Weak $r$-codes of cycles}\label{sec:weak}
In the following,
we will denote by $\cy_n$ the cycle of size $n$ and by $\{v_0,v_1,\dots,v_{n-1}\}$ the set of its vertices.
We first assume that $n \geq 2r+2$. 

\begin{lemma}\label{lem:blocweak}
Let $S$ be a set of $2r+2$ consecutive vertices on $\cy_n$.
If $C$ is a weak $r$-code of $\cy_n$, then $S$ contains at least two elements of $C$.
\end{lemma}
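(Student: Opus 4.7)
The plan is to assume $|C \cap S| \leq 1$ and derive a contradiction; writing $S = \{v_0, v_1, \ldots, v_{2r+1}\}$, I split into cases depending on whether $S$ contains no code vertex or a single one, and in the latter case on the position of that vertex within $S$.

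First I dispose of the ``domination'' cases. If $C \cap S = \emptyset$, then $B_r(v_r) = \{v_0, \ldots, v_{2r}\} \subseteq S$ contains no code vertex, contradicting that $C$ is $r$-dominating. Likewise, if $C \cap S = \{v_0\}$ then $B_r(v_{r+1}) = \{v_1, \ldots, v_{2r+1}\}$ avoids $C$, and if $C \cap S = \{v_{2r+1}\}$ then $B_r(v_r) = \{v_0, \ldots, v_{2r}\}$ avoids $C$, again contradicting $r$-domination.

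For the remaining case $C \cap S = \{v_j\}$ with $1 \leq j \leq 2r$, by symmetry I may assume $j \leq r$, so $j \in \{1, \ldots, r\}$. I focus on the central vertices $v_r$, $v_{r+1}$, $v_{r+2}$, whose $i$-balls for $i \in \lb 0, r \rb$ all lie inside $S$; in particular their intersections with $C$ are subsets of $\{v_j\}$. A direct distance computation gives
\[
B_i(v_r) \cap C = \{v_j\} \iff i \geq r - j, \qquad B_i(v_{r+1}) \cap C = \{v_j\} \iff i \geq r + 1 - j.
\]
Comparing these profiles, $v_r$ and $v_{r+1}$ are $i$-separated by $C$ only when $i = r - j$. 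Since $v_{r+1}$ must be identified at some radius $r_{v_{r+1}} \in \lb 0, r \rb$ that in particular separates it from $v_r$, this forces $r_{v_{r+1}} = r - j$. At this radius one has $B_{r-j}(v_{r+1}) \cap C = \emptyset$; on the other hand $d(v_j, v_{r+2}) = r + 2 - j > r - j$, so $v_j \notin B_{r-j}(v_{r+2})$, and the hypothesis $j \geq 1$ ensures $B_{r-j}(v_{r+2}) \subseteq S$, giving $B_{r-j}(v_{r+2}) \cap C = \emptyset$ as well. Thus $v_{r+1}$ and $v_{r+2}$ share the identifier $\emptyset$ at radius $r-j$, contradicting the identification of $v_{r+1}$.

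The main obstacle is keeping track of which balls stay inside $S$: as soon as a ball extends outside $S$ the analysis becomes sensitive to code vertices of $C$ located outside $S$, which are not controlled by the hypothesis. The split between the boundary cases $j \in \{0, 2r+1\}$, handled by domination, and the interior cases $j \in \{1, \ldots, 2r\}$, handled by identification, is calibrated so that in each subcase the relevant balls really do stay inside $S$.
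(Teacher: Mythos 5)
Your proof is correct and takes essentially the same route as the paper's: both isolate the single code vertex $v_j$ in (w.l.o.g.) the left half of $S$, observe that the central pair $v_r,v_{r+1}$ can be separated only at the one radius $r-j$, and then contradict identification using a third consecutive vertex. The only difference is that you compare $v_{r+1}$ with $v_{r+2}$ (an immediate $\emptyset$-versus-$\emptyset$ clash), whereas the paper compares $v_r$ with $v_{r-1}$ and finishes via a domination argument; your variant even handles the boundary case $j=r$ a little more smoothly.
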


\begin{proof}
Without loss of generality,
$S=\{v_0,v_1,\ldots, v_{2r+1}\}$. Assume $S$ contains a single element
of the code, say $a=v_i$ , w.l.o.g. $i\leq r$ (see Figure \ref{fig:blocweak}).
\begin{center}
  \begin{figure}[h]
\centering
    \begin{tikzpicture}[scale=0.7]
\fill[black!5,rounded corners=2pt] (0,-0.4) rectangle node[above=14pt] {} (10,0.7);
\path[draw,rounded corners=2pt] (0,-0.4) rectangle node[above=14pt] {\large $S$} (10,0.7);


\path[draw,dashed] (-1,0)--(-0.4,0);
\path[draw] (-0.4,0)--(0.4,0);
\path[draw,dashed] (0.4,0)--(3.8,0);
\path[draw] (3.8,0)--(5.4,0);
\path[draw,dashed] (5.4,0)--(9.4,0);
\path[draw] (9.4,0)--(10.4,0);
\path[draw,dashed] (10.4,0)--(11.2,0);

\draw (3.8,0) circle (4pt) node[below = 10pt] {$x$};
\draw (4.6,0) circle (4pt) node[below = 10pt] {$y$};
\draw (5.4,0) circle (4pt) node[below = 10pt] {$z$};
\draw[fill=white] (-0.4,0) circle (4pt);
\draw[fill=white] (10.4,0) circle (4pt);
\path[fill=white] (2.3,0) circle (4 pt) node[below=10 pt] {$a$};
\draw[fill=white] (0.4,0) circle (4pt) node[above = 0.7pt] {\small $v_0$};
\draw[fill=white] (3.8,0) circle (4pt) node[above = 0.7pt] {\small $v_{r-1}$};
\draw[fill=white] (5.4,0) circle (4pt) node[above = 0.7pt] {\small $v_{r+1}$};
\draw[fill=white] (4.6,0) circle (4pt) node[above = 0.7pt] {\small $v_r$};
\draw[fill] (2.3,0) circle (4pt) node[above = 0.7pt] {\small $v_i$};
\draw[fill=white] (9.4,0) circle (4pt) node[above = 0.7pt] {\small $v_{2r+1}$};
    \end{tikzpicture}    
    \caption{\label{fig:blocweak} Notation of the proof (Lemma \ref{lem:blocweak})}
  \end{figure}
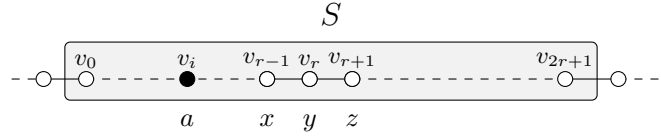
\end{center}
We focus on the
vertices $x=v_{r-1}$, $y=v_r$ and $z=v_{r+1}$. Then, $B_r(y)\subseteq S$ and $B_r(z)\subseteq S$.
Let $t=d(a,y)=r-i$. 

For all $r' \in \lb 0,t-1 \rb$, $B_{r'}(y)\cap C =
B_{r'}(z)\cap C=\emptyset$. For all $r'\in \lb t+1,r \rb$, $B_{r'}(y)\cap C =
B_{r'}(z) \cap C=\{a\}$. Hence $r_y = r_z = t$.
It follows that $B_t(y)\cap C =\{a\}$ must be different from $B_t(x)\cap C$. 
Hence, $B_t(x)\cap C$ must contain an element different from $a$, say $b$. Necessarily, $b \notin S$, this implies
$t=r$ and $z$ is not $r$-dominated, a contradiction.
\end{proof}

\

A first bound of $\wc{r}(\cy_n)$ directly follows from Lemma \ref{lem:blocweak}: 


\begin{corollary}\label{cor:weak}
 Let $C$ be a weak $r$-code of $\cy_n$. Then $|C|\geq \left\lceil n/(r+1)\right\rceil$.
\end{corollary}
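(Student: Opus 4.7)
The plan is to apply a standard double-counting (averaging) argument using Lemma \ref{lem:blocweak} over all cyclically consecutive windows of length $2r+2$ in $\cy_n$.

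First, I would index the windows: for each $i \in \{0,1,\dots,n-1\}$, let $S_i = \{v_i, v_{i+1}, \dots, v_{i+2r+1}\}$ where indices are taken modulo $n$. Since $n \geq 2r+2$, these are genuine sets of $2r+2$ distinct consecutive vertices, and Lemma \ref{lem:blocweak} yields $|S_i \cap C| \geq 2$ for every $i$. Summing over all $n$ windows gives
\[
\sum_{i=0}^{n-1} |S_i \cap C| \;\geq\; 2n.
\]

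Next I would evaluate the same sum by exchanging the order of summation. Each vertex $v \in V(\cy_n)$ lies in exactly $2r+2$ of the windows $S_i$ (namely those starting at the $2r+2$ vertices preceding $v$ on the cycle, $v$ included). Therefore
\[
\sum_{i=0}^{n-1} |S_i \cap C| \;=\; \sum_{c \in C} \#\{i : c \in S_i\} \;=\; (2r+2)\,|C|.
\]
Combining the two equations yields $(2r+2)|C| \geq 2n$, i.e. $|C| \geq n/(r+1)$, and since $|C|$ is an integer we conclude $|C| \geq \lceil n/(r+1)\rceil$.

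There is essentially no obstacle here: the only point requiring minor care is checking that when $n = 2r+2$ the window construction still works (each $S_i$ is the whole cycle, every vertex is still covered by $2r+2$ windows counted with multiplicity, and the bound reduces to $|C|\geq 2$, consistent with the lemma). For $n > 2r+2$ the windows are proper subsets and the counting is completely routine.
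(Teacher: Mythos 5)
Your proof is correct and follows essentially the same double-counting argument as the paper: sum the bound of Lemma \ref{lem:blocweak} over all $n$ windows of $2r+2$ consecutive vertices, note each code vertex is counted $2r+2$ times, and divide. Your extra remark about the boundary case $n=2r+2$ is a nice touch of care but does not change the substance.
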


\begin{proof}
  In $\cy_n$ there are $n$ different sets $S$ of $2r+2$ consecutives vertices.
If $C$ is a weak $r$-code, by Lemma \ref{lem:blocweak}, there are at least $2$ vertices of the code in each set 
$S$.
Each vertex of the code is counted exactly $2r+2$ times, so $|C|\geq \left\lceil 2n/(2r+2)\right\rceil
=\left\lceil n/(r+1) \right\rceil$.
\end{proof}

\

\noindent In the following, we set $n=(2r+2)p+R$, with $0\leq R \leq
2r+1$ and $p\geq 1$ (by assumption, $n \geq 2r+2$).  
Then Corollary \ref{cor:weak} can be reformulated as: if $C$ is a weak $r$-code of $\cy_n$, then we have
\begin{itemize}[noitemsep]
\item if $R=0$, then $|C|\geq 2p$;
\item if $1 \leq R \leq r+1$, then $|C|\geq 2p+1$;
\item if $r+2 \leq R \leq 2r+1$, then $|C|\geq 2p+2$.
\end{itemize}

Lemmas
\ref{lem:weak1} to \ref{lem:weak12} give some constructive upper
bounds. Moreover, Lemmas \ref{lem:weak1} to \ref{lem:weakdif} provides
exact values of $\wc{r}(\cy_n)$.

\begin{lemma}\label{lem:weak1} 
If $n=(2r+2)p$, then $\cy_n$ has a weak $r$-code
with cardinality $2p=n/(r+1)$; moreover, this code is optimum.
\end{lemma}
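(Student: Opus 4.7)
The lower bound $\wc{r}(\cy_n) \geq 2p$ is immediate from Corollary~\ref{cor:weak}: since $n/(r+1) = 2p$ is an integer, every weak $r$-code has at least $\lceil n/(r+1)\rceil = 2p$ vertices. The content of the lemma is therefore the matching upper bound.

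For the construction I would place the code periodically in adjacent pairs,
$$C = \{v_{i(2r+2)},\, v_{i(2r+2)+1} : 0 \leq i \leq p-1\},$$
so that $|C| = 2p$, each pair is separated from the next by a gap of $2r$ non-code vertices, and $C$ is visibly $r$-dominating. To verify that $C$ is a weak $r$-code I would assign an explicit identifying radius to every vertex. By translational symmetry it suffices to work inside a single block $v_0, v_1, \ldots, v_{2r+1}$ with $v_0, v_1 \in C$. The two code vertices are identified at radius~$0$ since $B_0(v_j) \cap C = \{v_j\}$. For a non-code vertex $v_k$ with $2 \leq k \leq r+1$ I would take $r_{v_k} = k-1$; a direct distance check on $\cy_n$ gives $B_{k-1}(v_k) \cap C = \{v_1\}$, since $d(v_k, v_1) = k-1$ while $d(v_k, v_0) = k$ and $d(v_k, v_{2r+2}) = 2r+2-k \geq r+1$. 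The symmetric choice $r_{v_k} = 2r+2-k$ handles $r+2 \leq k \leq 2r+1$ with identifier $\{v_{2r+2}\}$.

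The one nontrivial point is showing that no other vertex $y$ produces the same singleton $\{v_1\}$ at radius $k-1$. The condition $v_1 \in B_{k-1}(y)$ forces $d(y, v_1) \leq k-1$, hence by the triangle inequality $d(y, v_0) \leq k$; requiring $v_0 \notin B_{k-1}(y)$ pins the distance down to $d(y, v_0) = k$ exactly, and combined with $d(y, v_1) \leq k-1$ this forces $y = v_k$ (the symmetric candidate $v_{n-k}$ is too far from $v_1$ once $n \geq 2r+2$). The mirror-image argument handles $r+2 \leq k \leq 2r+1$. I expect this uniqueness check to be the main obstacle; once it is in place, combining with the lower bound yields $\wc{r}(\cy_n) = 2p$.
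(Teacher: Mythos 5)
Your proof is correct and takes essentially the same route as the paper: the code is the same periodic pattern of adjacent pairs separated by gaps of $2r$ non-code vertices (merely rotated so that the pair sits at the start of each block of length $2r+2$ instead of the middle), the radius assigned to each vertex is its distance to the nearest code vertex, and the lower bound is Corollary~\ref{cor:weak}. Your explicit uniqueness check is in fact more detailed than the verification the paper records.
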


\begin{proof}
  We construct the code by repeating the pattern depicted by Figure  \ref{fig:motifweak}. 
  More precisely, let $C=\{v_i \st i\equiv r \ [2r+2]$ or $i\equiv r+1 \ [2r+2]\}$.
  The set $C$ has cardinality $2p$. The set $C$ $r$-dominates all the vertices of $\cy_n$.
  Let $r_{v_k}=r-k$ if $k \in \lb 0,r \rb$ and $r_{v_k}=k-(r+1)$ if $k \in \lb r+1,2r+1 \rb$ 
(the indices of the vertices of $\cy_n$ are taken modulo $2r+2$). 
Then for all pair of vertices $v_k,v_{l}$, $k\neq l$, we have $B_{r_{v_k}}(v_k) \cap C \neq
B_{r_{v_k}}(v_{l}) \cap C$. Hence $C$ is an
$r$-dominating set that $r_{v_k}$-identifies the vertex $v_k$. It
follows that $C$ is a weak $r$-code. This code is optimum by Corollary \ref{cor:weak}.
Figure \ref{fig:weak} gives an example of such a code in $\cy_{12}$.

\end{proof}

\begin{figure}[h]
\begin{center}
\begin{tikzpicture}[scale = 1]
\path[draw] (0,-0.6) -- (0,0.6);
\path[draw] (10,-0.6) -- (10,0.6);

\path[draw] (-1,0)--(1,0);
\path[draw][dashed] (1,0)--(3,0);
\path[draw] (3,0)--(7,0);
\path[draw][dashed] (7,0)--(9,0);
\path[draw] (9,0)--(11,0);
\path[draw, fill] (4.5,0) circle (5 pt);
\path[draw, fill] (5.5,0) circle (5 pt);
\draw[draw, fill=white] (3.5,0) circle (5pt);
\draw[draw, fill=white] (0.5,0) circle (5pt);
\draw[draw, fill=white] (6.5,0) circle (5pt);
\draw[draw, fill=white] (9.5,0) circle (5pt);
\draw[snake=brace,mirror snake] (0.5,-0.5) -- node[below=2pt] {$r$} (3.5,-0.5);
\draw[snake=brace, mirror snake] (6.5,-0.5) -- node[below=2pt] {$r$} (9.5,-0.5);

\end{tikzpicture}
\caption{\label{fig:motifweak} The pattern for a weak $r$-code in the cycles $\cy_{(2r+2)p}$ with $p\geq 1$}
\end{center}
\end{figure}
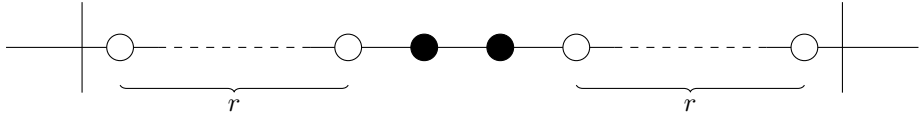

\begin{figure}[h]
    \begin{center}
 \begin{minipage}[c]{.46\linewidth}
\centering
\begin{tikzpicture}[scale=1.4]
 
    \tikzstyle{every node}=[]
\node at (0:1.3) {$v_4$};
\node at (30:1.3) {$v_3$};
\node at (60:1.3) {$v_2$};
\node at (90:1.3) {$v_1$};
\node at (120:1.3) {$v_0$};
\node at (150:1.3) {$v_{11}$};
\node at (180:1.3) {$v_{10}$};
\node at (210:1.3) {$v_9$};
\node at (240:1.3) {$v_8$};
\node at (270:1.3) {$v_7$};
\node at (300:1.3) {$v_6$};
\node at (330:1.3) {$v_5$};

\path[draw,help lines] (0,0) circle (1 cm);
\tikzstyle{every node}=[shape=circle,fill=white,draw=black,minimum
      size=0.5pt,inner
      sep=1.5pt]
\node at (0:1) {};
\node at (30:1) {};
\node at (60:1) {};
\node at (90:1) {};
\node at (120:1) {};
\node at (150:1) {};
\node at (180:1) {};
\node at (210:1) {};
\node at (240:1) {};
\node at (270:1) {};
\node at (300:1) {};
\node at (330:1) {};

\tikzstyle{every node}=[shape=circle,fill=black,draw=black,minimum
      size=0.5pt,inner
      sep=1.5pt]
      \node at (60:1) {};
\node at (30:1) {};
\node at (210:1) {};
\node at (240:1) {};

    \end{tikzpicture}
    \caption{\label{fig:weak}An optimum weak 2-code of $\cy_{12}$}
   \end{minipage} \hfill
   \begin{minipage}[c]{.46\linewidth}
\centering
 \begin{tikzpicture}[scale=1.4]
      
\path[draw,help lines] (0,0) circle (1 cm);
\tikzstyle{every node}=[shape=circle,fill=white,draw=black,minimum
      size=0.5pt,inner
      sep=1.5pt]
\node at (0:1) {};
\node at (27.7:1) {};
\node at (55.4:1) {};
\node at (83.1:1) {};
\node at (110.8:1) {};
\node at (138.5:1) {};
\node at (166.2:1) {};
\node at (193.9:1) {};
\node at (221.5:1) {};
\node at (249.2:1) {};
\node at (277.0:1) {};
\node at (304.6:1) {};
\node at (332.3:1) {};

\tikzstyle{every node}=[]
\node at (0:1.3) {$v_4$};
\node at (27.7:1.3) {$v_3$};
\node at (55.4:1.3) {$v_2$};
\node at (83.1:1.3) {$v_1$};
\node at (110.8:1.3) {$v_0$};
\node at (138.5:1.3) {$v_{12}$};
\node at (166.2:1.3) {$v_{11}$};
\node at (193.9:1.3) {$v_{10}$};
\node at (221.5:1.3) {$v_9$};
\node at (249.2:1.3) {$v_8$};
\node at (277.0:1.3) {$v_7$};
\node at (304.6:1.3) {$v_6$};
\node at (332.3:1.3) {$v_5$};

\tikzstyle{every node}=[shape=circle,fill=black,draw=black,minimum
      size=0.5pt,inner
      sep=1.5pt]
 \node at (27.7:1) {};
\node at (55.4:1) {};    
\node at (221.5:1) {};
\node at (249.2:1) {};
\node at (138.5:1) {};
    \end{tikzpicture}
    \caption{\label{fig:test}An optimum weak 2-code of $\cy_{13}$}
   \end{minipage}
\end{center}
\end{figure}

\bigskip

We can easily extend this construction to the general case:

\begin{lemma}\label{lem:weakex}
  If $R=1$, then $\cy_n$ has a weak $r$-code with $2p+1$ elements. 
  If $2\leq R\leq 2r+1$, then $\cy_n$ has a weak $r$-code with $2p+2$ elements.
  These codes are optimum for $R=1$ or $R \geq r+2$.
\end{lemma}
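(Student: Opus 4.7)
The plan is to prove the lemma by combining an explicit construction for each residue $R$ with the lower bound from Corollary \ref{cor:weak} for the optimality claims. Both constructions extend the ``pair $+$ gap of $2r$'' pattern of Lemma \ref{lem:weak1}, absorbing the residue either as an isolated singleton (when $R=1$) or by inserting an additional pair with one shortened gap (when $R\ge 2$).

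For $R=1$, I would arrange $p$ pairs of consecutive codewords separated by $p-1$ standard gaps of $2r$ non-code vertices, plus one singleton codeword flanked by two gaps of exactly $r$ non-code vertices that separate it from the adjacent pairs. A direct count gives $2p+1$ codewords and $2r(p-1)+2r=2rp$ non-code vertices, totalling $n$. For $R\ge 2$, I would place $p+1$ pairs of consecutive codewords, with $p$ standard gaps of $2r$ non-code vertices and one ``short'' gap of $R-2 \in \lb 0, 2r-1\rb$ non-code vertices. This uses $2p+2$ codewords on $(2r+2)p+R=n$ vertices, as prescribed.

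Verification that each construction is a weak $r$-code splits into two parts. First, $r$-domination reduces to $\lceil g/2\rceil\le r$ for each gap size $g\in\{r,2r,R-2\}$, which holds since all our $g$ satisfy $g\le 2r$. Second, for every vertex $v$ one must exhibit a radius $r_v\in\lb 0,r\rb$ for which $B_{r_v}(v)\cap C$ is unique on the cycle. Codewords use $r_v=0$; non-code vertices lying in any standard gap of $2r$ inherit verbatim the assignment $r_{v_k}=r-k$ of Lemma \ref{lem:weak1}, because the argument there is local to the configuration ``pair, gap of $2r$, pair,'' which persists here. For a vertex $v$ close to the anomaly, I would take $r_v$ equal to the distance from $v$ to the nearest codeword $c$: at that radius the ball captures $c$, and possibly one further codeword on the opposite side, producing a combination that cannot appear inside a standard block (where at the corresponding radius both codewords of a pair would be present together).

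The main obstacle is precisely this last verification — showing that no two distinct vertices near the anomaly (within the singleton's two $r$-gaps for $R=1$, or within the short gap and its immediate neighborhood for $R\ge 2$) share the same identifying ball. My plan is to dispatch this by a short case analysis exploiting the asymmetry between a paired codeword and the singleton in the $R=1$ case, and the strict inequality between the short gap length $R-2$ and the standard length $2r$ in the $R\ge 2$ case; in each subcase the ball contents either capture a unique singleton neighbor or a pair-plus-gap profile not seen elsewhere. Finally, optimality follows at once from Corollary \ref{cor:weak}: since $\lceil n/(r+1)\rceil = 2p+\lceil R/(r+1)\rceil$, this equals $2p+1$ when $R=1$ and is at least $2p+2$ when $R\ge r+2$, matching the sizes produced by the constructions above.
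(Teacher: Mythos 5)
Your overall route is the same as the paper's: an explicit periodic construction (pairs of adjacent codewords separated by gaps of $2r$, with the residue absorbed by a singleton for $R=1$ or an extra pair for $R\geq 2$) together with Corollary \ref{cor:weak} for optimality. The constructions themselves are fine (the paper places the extra pair at $\{v_{n-2},v_{n-1}\}$, leaving gaps of $r$ and $r+R-2$ around it rather than your single short gap of $R-2$, but both work), and your optimality computation $\lceil n/(r+1)\rceil = 2p+\lceil R/(r+1)\rceil$ is correct. The $R\geq 2$ verification also goes through with your rule, because there every codeword belongs to a pair, so the trace $\{c\}$ at radius $d(v,c)$ forces $v$ to sit at distance exactly $d$ from $c$ on the side opposite $c$'s partner.

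However, the radius rule you propose for the anomaly in the $R=1$ case --- ``take $r_v$ equal to the distance from $v$ to the nearest codeword'' --- genuinely fails for the vertices in the two $r$-gaps flanking the singleton $s$ that are closer to $s$ than to their adjacent pair. If $w$ lies at distance $\rho=d(w,s)<d(w,a_2)$ (where $a_2$ is the near codeword of the adjacent pair), then $B_{\rho}(w)\cap C=\{s\}$: no ``further codeword on the opposite side'' is captured, and the mirror vertex $w'$ on the other side of $s$ at the same distance $\rho$ satisfies $B_{\rho}(w')\cap C=\{s\}$ as well, so $w$ is not identified at that radius. (Concretely: $r=2$, $n=7$, $C=\{v_2,v_3,v_6\}$, $w=v_0$, $w'=v_5$, $\rho=1$.) The fix is to use the distance to the \emph{paired} codeword instead: at radius $d(w,a_2)$ the trace is $\{a_2,s\}$, and since $a_1\notin B_{d(w,a_2)}(w)$, any vertex with this trace must lie at distance exactly $d(w,a_2)$ from $a_2$ on the $s$-side, which pins down $w$ uniquely. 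With that substitution your case analysis closes; note the published proof simply asserts the codes are weak $r$-codes without carrying out any of this verification.
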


\begin{proof}
  Let $R=1$ and $C=\{v_i \st i\equiv r \ [2r+2]$ or $i\equiv r+1 \ [2r+2] \}\cup\{v_{n-1}\}$.
  Then $C$ is a weak $r$-code of $\cy_n$ and $|C|=2p+1$. (See Figure \ref{fig:test}.)

\noindent Assume now that $R\geq 1$ and take for code $C=\{v_i \st
i\equiv r [2r+2]$ or $i\equiv r+1 [2r+2] \}$ if $R \geq r+2$
and $C=\{v_i \st i\equiv r [2r+2]$ or $i\equiv r+1 [2r+2] \}\cup\{v_{n-2},v_{n-1}\}$ otherwise.
Then $C$ is a weak $r$-code of $\cy_n$.
\end{proof}

\

In some cases, the aforementioned codes are not optimum:

\begin{lemma}\label{lem:weak12}
  If $(r,R)=(1,2)$, then $\cy_n$ has an optimum weak $1$-code of cardinality $2p+1$.
  If $(r,R)=(2,2)$, then $\cy_n$ has an optimum weak $2$-code of cardinality $2p+1$.
\end{lemma}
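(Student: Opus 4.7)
The lower bound $|C|\ge 2p+1$ is supplied by Corollary~\ref{cor:weak}: in both cases $R=2$ lies in the range $1\le R\le r+1$, which forces $|C|\ge 2p+1$. My plan is therefore to exhibit explicit weak $r$-codes of size $2p+1$ in each of the two cases, which together with the lower bound establishes optimality.

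For $(r,R)=(1,2)$ with $n=4p+2$, I take $C=\{v_{2k} : 0\le k\le 2p\}$, the set of all even-indexed vertices. This has cardinality $2p+1$. It is $1$-dominating because each odd-indexed vertex $v_{2k+1}$ is adjacent to the two code vertices $v_{2k}$ and $v_{2k+2}$ (indices mod $n$). For identification, each code vertex is $0$-identified by its singleton $B_0\cap C$, and each non-code vertex $v_{2k+1}$ satisfies $B_1(v_{2k+1})\cap C=\{v_{2k},v_{2k+2}\}$; the pairs of consecutive even indices are all distinct as $k$ varies, so each non-code vertex is $1$-identified.

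For $(r,R)=(2,2)$ with $n=6p+2$, I take $C=\{v_0,v_2,v_4\}\cup\bigcup_{k=1}^{p-1}\{v_{6k+2},v_{6k+4}\}$, of cardinality $3+2(p-1)=2p+1$. Structurally this is one initial ``triple'' of isolated singletons at positions $0,2,4$ (separated by the non-code vertices $v_1,v_3$), then $p-1$ ``pairs'' of isolated singletons $v_{6k+2},v_{6k+4}$ (separated by non-code $v_{6k+3}$), with a ``long gap'' of three consecutive non-code vertices between consecutive groups around the cycle. Two-domination is routine since every non-code vertex is within distance at most $2$ of some code vertex. For identification, the code vertices are $0$-identified; the non-code vertices inside groups, namely $v_1,v_3$ and each $v_{6k+3}$, have $B_1\cap C$ equal to a distinct pair of consecutive code vertices and are $1$-identified; and in each long gap $\{v_{6k+5},v_{6k+6},v_{6k+7}\}$ the three non-code vertices have $B_2\cap C$ equal respectively to a singleton containing only the code vertex just before the gap, to the pair of code vertices bracketing the gap, and to a singleton containing only the code vertex just after the gap (with the cyclic convention that ``after'' the last long gap one meets $v_0$). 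Each of these three sets is realized by a unique vertex of the cycle, so each is $2$-identified.

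The main obstacle is the identification step in the $(r,R)=(2,2)$ construction. More symmetric candidates---equispaced singletons, a single block of consecutive code vertices, or pairs placed at a common period---create twin non-code vertices whose $B_r\cap C$ profiles coincide at every $r\in\{0,1,2\}$ and hence cannot be separated at any radius up to $r$; the weak-code condition then fails. The asymmetric ``one triple plus $p-1$ pairs with long gaps of three'' pattern is designed precisely to prevent such twins, and the verification reduces to a short case analysis checking that each $B_2\cap C$ signature appearing in the construction is attained by exactly one vertex of $\cy_n$, organised according to which type of gap (small interior gap of size $1$ or long separating gap of size $3$) a given vertex lies in.
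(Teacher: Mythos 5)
Your proof is correct and follows essentially the same route as the paper: exhibit an explicit code of size $2p+1$ in each case and invoke Corollary~\ref{cor:weak} for optimality. Your $(1,2)$ code is literally the paper's (all even-indexed vertices), and your $(2,2)$ code---gap pattern $1,1,3,1,3,\dots,1,3$---is a rotation of the paper's code $\{v_i \st i\equiv 0 \ [6] \text{ or } i\equiv 2\ [6]\}$, whose cyclic gap pattern is $1,3,1,3,\dots,3,1$; you simply supply more of the verification details that the paper leaves to the reader.
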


Figure \ref{fig:weakr1R2} (\emph{resp.} Figure \ref{fig:weakr2R2}) shows an example of an optimum weak $r$-code for $(r,R)=(1,2)$ (\emph{resp. $(r,R)=(2,2)$}).

\begin{figure}[h]
  \begin{center}
    \begin{minipage}[c]{.46\linewidth}
      \centering
      \begin{tikzpicture}[scale=1.4]
        
        \path[draw,help lines] (0,0) circle (1 cm);
        \tikzstyle{every node}=[shape=circle,fill=white,draw=black,minimum
        size=0.5pt,inner
        sep=1.5pt]
        \node at (0:1) {};
        \node at (36:1) {};
        \node at (72:1) {};
        \node at (108:1) {};
        \node at (144:1) {};
        \node at (180:1) {};
        \node at (216:1) {};
        \node at (252:1) {};
        \node at (288:1) {};
        \node at (324:1) {};
        
        \tikzstyle{every node}=[]
        \node at (0:1.3) {$v_4$};
        \node at (36:1.3) {$v_3$};
        \node at (72:1.3) {$v_2$};
        \node at (108:1.3) {$v_1$};
        \node at (144:1.3) {$v_0$};
        \node at (180:1.3) {$v_9$};
        \node at (216:1.3) {$v_8$};
        \node at (252:1.3) {$v_7$};
        \node at (288:1.3) {$v_6$};
        \node at (324:1.3) {$v_5$};
        
        \tikzstyle{every node}=[shape=circle,fill=black,draw=black,minimum
        size=0.5pt,inner
        sep=1.5pt]
        \node at (0:1) {};
        \node at (72:1) {};
        \node at (144:1) {};
        \node at (216:1) {};
        \node at (288:1) {};
    \end{tikzpicture}
    \caption{\label{fig:weakr1R2}An optimum weak 1-code of $\cy_{10}$}
  \end{minipage} \hfill
  \begin{minipage}[c]{.46\linewidth}
    \centering
    \begin{tikzpicture}[scale=1.4]
      
      \path[draw,help lines] (0,0) circle (1 cm);
\tikzstyle{every node}=[shape=circle,fill=white,draw=black,minimum
      size=0.5pt,inner
      sep=1.5pt]
\node at (0:1) {};
\node at (45:1) {};
\node at (90:1) {};
\node at (135:1) {};
\node at (180:1) {};
\node at (225:1) {};
\node at (270:1) {};
\node at (315:1) {};

\tikzstyle{every node}=[]
\node at (0:1.3) {$v_4$};
\node at (45:1.3) {$v_3$};
\node at (90:1.3) {$v_2$};
\node at (135:1.3) {$v_1$};
\node at (180:1.3) {$v_0$};
\node at (225:1.3) {$v_{7}$};
\node at (270:1.3) {$v_{6}$};
\node at (315:1.3) {$v_{5}$};
\tikzstyle{every node}=[shape=circle,fill=black,draw=black,minimum
      size=0.5pt,inner
      sep=1.5pt]
\node at (90:1) {};
\node at (180:1) {};
\node at (270:1) {};

    \end{tikzpicture}
    \caption{\label{fig:weakr2R2}An optimum weak 2-code of $\cy_{8}$}
   \end{minipage}
\end{center}
\end{figure}
\begin{proof} 
For $(r,R)=(1,2)$, the set $C=\left \{v_{i} \st i \equiv 0 [2] \right \}$ is a weak $1$-code: 
  each vertex $x$ in the code is $0$-identified by $C$ and each vertex $x$ 
not in the code is $1$-identified by $C$. 
For $(r,R)=(2,2)$, the set $C=\{v_i \st i\equiv 0 \ [6]$ or $i\equiv 2 \ [6]\}$ 
is a weak $2$-code. The optimality of these codes is shown by Corollary \ref{cor:weak}.
\end{proof}\\
\

The next lemma shows that the lower bound of Corollary \ref{cor:weak}
is not sharp for $2\leq R \leq r+1$ and $(r,R)\neq (1,2)$ or $(2,2)$, this
implies that in these cases, codes of Lemma \ref{lem:weakex} are
optimum.

\begin{lemma}\label{lem:weakdif}
  If $2\leq R\leq r+1$ and $(r,R)\neq (1,2)$ or $(2,2)$, 
then $\cy_n$ does not have a weak $r$-code of cardinality $2p+1$.
\end{lemma}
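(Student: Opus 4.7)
The plan is a proof by contradiction. Assume $C$ is a weak $r$-code of $\cy_n$ with $|C|=2p+1$, and let $d_1,\ldots,d_{2p+1}$ be the cyclic sequence of gaps between consecutive codewords (indices mod $2p+1$), so $\sum_i d_i = n$. As in the proof of Corollary~\ref{cor:weak}, Lemma~\ref{lem:blocweak} is equivalent to the constraint $d_{i-1}+d_i \leq 2r+2$ for every $i$; summing these yields $2n \leq (2p+1)(2r+2)$, with slack
\[
\delta \;:=\; (2p+1)(2r+2) \;-\; 2n \;=\; 2(r+1-R) \;\geq\; 0
\]
to be driven to a contradiction.

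I would first settle the tight case $R=r+1$ (so $\delta=0$). Every adjacent-sum is tight, and iterating $d_{i+1}=2r+2-d_i$ around a cycle of odd length $2p+1$ forces $d_i=r+1$ uniformly. For $r \geq 2$, this uniform configuration is not a weak $r$-code: pick any non-code vertex $x=c_i+1$ adjacent to a codeword. A direct computation gives $B_r(x)\cap C = \{c_i,c_{i+1}\}$, a set equal to $B_r(c_i+k)\cap C$ for every $k \in \lb 1, r \rb$; meanwhile at radii $s \leq r-1$, $B_s(x)\cap C$ is either $\emptyset$ or $\{c_i\}$ and matches $B_s(c_i)\cap C$ (no other codeword lies within $r$ of $c_i$). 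Hence $x$ has no identifying radius, a contradiction. The excluded case $(r,R)=(1,2)$ evades the argument since each $2$-gap then has a single interior vertex, $1$-identified by the distinctive pair $\{c_i,c_{i+1}\}$ in its $1$-ball.

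For the main range $2 \leq R \leq r$ (so $\delta \geq 2$), the idea is that even with positive slack the gap sequence must locally resemble the uniform-$(r+1)$ situation. Two consecutive tight sums $d_{i-1}+d_i = d_i+d_{i+1} = 2r+2$ force $d_{i+1}=d_{i-1}$, so any long run of tight sums propagates a rigid alternation $\alpha,2r+2-\alpha,\alpha,\ldots$; around an odd number of gaps with only $\delta/2 = r+1-R$ non-tight sums available, this alternation cannot close up unless the sequence contains a triple $(d_{i-1},d_i,d_{i+1})$ with $d_i\leq r$ and $d_{i-1},d_{i+1}\geq r$. In such a triple, the non-code vertices $x=c_i+1$ and $y=c_i+2$ satisfy $B_s(x)\cap C = B_s(y)\cap C$ for every $s \in \lb 2, r \rb$ (since $c_{i-1}, c_{i+2}$ are too far to distinguish them), while at $s \leq 1$ we have $B_s(x)\cap C \in \{\emptyset,\{c_i\}\}$ coinciding with $B_s(c_i)\cap C$. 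So $x$ is never identified, contradiction. The exception $(r,R)=(2,2)$ is precisely where $r$ is small enough that in the alternating construction of Lemma~\ref{lem:weak12} (gaps $2,4,2,\ldots,2$), the interior vertices of each gap are pinned down uniquely by their short $B_1$ or $B_2$ balls.

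The main obstacle is the structural bookkeeping in the second step: verifying that every admissible distribution of the $r+1-R$ non-tight adjacent-sums around the $2p+1$ gaps actually produces a "short-gap-between-long-gaps" triple of the kind above, and separately checking that $(r,R) \in \{(1,2),(2,2)\}$ are the only regimes in which the otherwise-failing vertex $x=c_i+1$ acquires a distinguishing short-radius ball. This combinatorial case analysis on how the slack $\delta$ can be spent is the delicate part of the proof.
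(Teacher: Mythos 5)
Your reduction to the gap sequence is sound: with $d_i$ the distances between consecutive codewords, Lemma~\ref{lem:blocweak} is indeed equivalent to $d_{i-1}+d_i\leq 2r+2$, the slack is $2(r+1-R)$, and your treatment of the tight case $R=r+1$ (all gaps forced to $r+1$ on an odd cycle, then the vertex adjacent to a codeword is never identified for $r\geq 2$) is correct and is a clean special case of what the paper does. Two small slips there: the number of non-tight adjacent sums is bounded by $\delta$, not $\delta/2$ (individual deficits need not be even), and your claim that $B_s(x)\cap C=B_s(y)\cap C$ for \emph{all} $s\in\lb 2,r\rb$ in a triple is false when $d_i\geq 4$ (at $s=d_i-2$ the codeword $c_{i+1}$ separates them); the conclusion can still be rescued for $d_i\geq 3$ by splitting radii into $s\leq d_i-2$ (where $x$ is not separated from $c_i$) and $s\geq d_i-1$ (where $x$ is not separated from $y$).

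The genuine gap is in the main range $2\leq R\leq r$: the structural claim that the gap sequence must contain a short gap, with at least two interior vertices, flanked by long gaps, is false, and when the short gap has a single interior vertex your argument does not apply. Concretely, take $r=3$, $n=10$, $C=\{v_0,v_4,v_8\}$ (gaps $4,4,2$; all adjacent sums are $\leq 8$, so Lemma~\ref{lem:blocweak} is satisfied). The only ``short between long'' triple is $(4,2,4)$, whose unique interior vertex $v_9$ \emph{is} identified at radius $1$ by the pair $\{v_8,v_0\}$ — exactly the escape you grant to $(r,R)=(1,2)$. The code nevertheless fails, but for a different reason: $v_3$, sitting inside a gap of length $r+1$, has $B_s(v_3)\cap C=B_s(v_4)\cap C=\{v_4\}$ for $s\in\{1,2\}$ and $B_3(v_3)\cap C=B_3(v_2)\cap C=\{v_0,v_4\}$, so it has no identifying radius. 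So the obstruction lives inside the \emph{long} gaps, not the short ones. This is precisely where the paper's proof looks: it takes $S_M$, a maximum run of non-codewords, shows its middle vertex $y$ forces $M$ odd with $r_y=d(y,c_1)=d(y,c_2)$, then propagates to the neighbour $x$ of $y$ to force $r_x=r$, $M=2r-1$, $|S_1|=1$ and $R=2$, and finally derives two consecutive codewords, contradicting its observation (O.1). To repair your proof you would need to replace the ``bad triple'' claim by an analysis of the longest gap and its neighbours, which essentially reconstructs the paper's argument.
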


\begin{proof}
\def\ca{c_0}
\def\cb{c_1}
\def\cc{c_2}
\def\cd{c_3}
  Assume that there is a weak $r$-code $C$ of $\cy_n$ of cardinality $2p+1$. First, observe:
  \begin{enumerate}
\item[(O.1)] In a set of $R$ consecutive vertices of $\cy_n$, 
there must be at most one vertex of $C$. Otherwise, in the rest of
$\cy_n$, there 
are at most $2p-1$ vertices of the code in a set of $(2r+2)p$ consecutive vertices 
which contradicts Lemma \ref{lem:blocweak}.
In particular there is no pair of consecutive vertices of $C$.

\item[(O.2)] For similar reasons, in a set of $2r+2+R$ consecutive vertices of $\cy_n$, 
there must be at most $3$ vertices of $C$.
\end{enumerate}

Let $M$ be the maximum size of a set of consecutive vertices not in $C$ 
and let $S_M$ be a set of $M$ consecutive vertices not in $C$. 
We know by (O.1) that $M\geq R-1$.
Moreover $M>1$; indeed, if $M=1$, then $R=2$ and the code is exactly one vertex over 2, so $|C|=\frac{n}{2}=2p+1$, $n=4p+2$ 
and $(r,R)=(1,2)$.

Let us denote $\cb$ and $\cc$ the two elements of the code bounding $S_M$, 
let $S_1$ and $S_2$ be the two maximal sets of consecutive vertices 
not in $C$ who are before $\cb$ and after $\cc$, and finally $\ca$ and $\cd$ the two vertices of the code who are before $S_1$ and after $S_2$ (see Figure \ref{fig:not}).

\vspace{-1em}
\begin{center}
  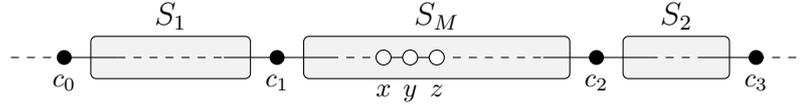
\begin{figure}[h]
\centering
    \begin{tikzpicture}[scale=0.7]

\fill[black!5,rounded corners=2pt] (0.5,-0.4) rectangle node[above=7pt] {} (3.5,0.4) ;
\path[draw,rounded corners=2pt] (0.5,-0.4) rectangle node[above=7pt] {\large $S_1$} (3.5,0.4) ;

\fill[black!5,rounded corners=2pt] (4.5,-0.4) rectangle node[above=7pt] {} (9.5,0.4);
\path[draw,rounded corners=2pt] (4.5,-0.4) rectangle node[above=7pt] {\large $S_M$} (9.5,0.4);

\fill[black!5,rounded corners=2pt] (10.5,-0.4) rectangle node[above=7pt] {} (12.5,0.4);
\path[draw,rounded corners=2pt] (10.5,-0.4) rectangle node[above=7pt] {\large $S_2$} (12.5,0.4);

\path[draw,dashed] (-1,0)--(0,0);
\path[draw] (0,0)--(1,0);
\path[draw,dashed] (1,0)--(3,0);
\path[draw] (3,0)--(5,0);
\path[draw,dashed] (5,0)--(6,0);
\path[draw] (6,0)--(7,0);
\path[draw,dashed] (7,0)--(9,0);
\path[draw] (9,0)--(11,0);
\path[draw,dashed] (11,0)--(12,0);
\path[draw] (12,0)--(13,0);
\path[draw,dashed] (13,0)--(14,0);


\path[fill] (0,0) circle (4 pt) node[below=3 pt] {$\ca$};
\fill (4,0) circle (4 pt) node[below =3 pt ] {$\cb$};
\fill (10,0) circle (4 pt) node[below =3pt] {$\cc$};
\fill (13,0) circle (4 pt) node[below=3pt] {$\cd$};
\draw[fill=white] (6,0) circle (4pt) node[below = 7pt] {$x$};
\draw[fill=white] (6.5,0) circle (4pt) node[below = 7pt] {$y$};
\draw[fill=white] (7,0) circle (4pt) node[below = 7pt] {$z$};

\end{tikzpicture}
\caption{\label{fig:not}Notation of the proof (Lemma \ref{lem:weakdif})}
\end{figure}
\end{center}

\vspace{-2.5em}

\begin{itemize}[noitemsep]
\item Observe that $p\geq 1$, so $C$ has cardinality at least $3$ and
observe by (O.1) that $S_1$ and $S_2$ are not empty.  Hence, the elements
$\cb$, $\cc$, $\cd$ may be supposed distincts and so on for elements
$\ca$, $\cb$ and $\cc$, but note that $\ca$ and $\cd$ may denote the
same vertex.

\item Observe by (O.1) that $|S_1|\geq R-1$, $|S_2|\geq R-1$, $M\geq R-1$.
Let us denote $S$ the set $S_1\cup \{\cb \}\cup S_M \cup \{\cc \} \cup S_2$. 

\item Observe that $\vert S \vert \geq 2r+3$. Indeed, if $\ca$ and $\cd$ are different vertices, 
then $\{\ca\}\cup S \cup \{\cd\}$ is a set with $4$ vertices of the code, 
so, by (O.2) $|S|+2 > 2r+2+R \geq 2r+4$. 
If $\ca$ and $\cd$ denote the same vertex, then $S\cup\{\cd \}=V(\cy_n)$, $p=1$ and $\vert S \vert =n-1=2r+1+R \geq 2r+3$.
\end{itemize}

\noindent So there are three consecutive vertices $x,y,z$ in $S$ such that 
$\left \{  B_r(x)\cup B_r(y) \cup B_r(z) \right \} \cap C \subseteq \{\cb, \cc\}$ and $y\in S_M$.

\vspace{0.5em}


To separate $y$ and $x$, $r_y$ must be $d(x,\cb)$ or $d(y,\cc)$.  To
separate $y$ and $z$, $r_y$ must be $d(y,\cb)$ or $d(z,\cc)$.
Therefore, either $r_y=d(x,\cb)=d(z,\cc)$, or $r_y=d(y,\cc)=d(y,\cb)$.
In all cases, $M$ is odd and $y$ is the middle element of $S_M$, so
$d(y,\cb)=d(y,\cc)$. As $M \neq 1$ then $M\geq 3$ and $(x,z) \in
S_{M}\times S_{M}$.

\vspace{0.5em}
Let $d_y$ denote $d(y,\cb)$ in the following. Let $w$ be the vertex just before $x$. Then $B_r(w) \cap C \subseteq\{\ca
,\cb,\cc\}$.  To separate $x$ from $y$, $r_x$ must be $d(y,\cc)=d_y$
or $d(x,\cb)=d_y-1$.  To separate $x$ from $w$, $r_x$ must be
$d(w,\cb)=d_y-2$ or $d(x,\cc)=d_y+1$ or $d(w,\ca)$.  Necessarily, we
have $r_x=d(w,\ca)$. This implies $d(w,\ca)=r$ because
$d(w,\ca)=d(x,\ca)-1 \geq r$ and $r_{x}\leq r$. Since $d_y\leq r$ and
$r_{x}=d_y$ or $r_{x}=d_y-1$. It follows $r_x=d_y=r$. Therefore $M
=2r-1$, $\vert S_1 \vert = 1$, and finally $R=2$.  With similar
arguments for $z$, we obtain the situation depicted by Figure
\ref{fig:simpl}.

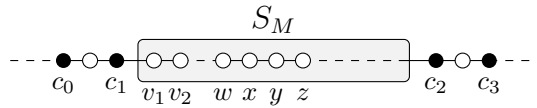
\begin{figure}[h]
\centering
    \begin{tikzpicture}[scale=0.7]

\fill[black!5,rounded corners=2pt] (1.4,-0.4) rectangle node[above=7pt] {} (6.5,0.4);
\draw[rounded corners=2pt] (1.4,-0.4) rectangle node[above=7pt] {\large $S_M$} (6.5,0.4);

\path[draw,dashed] (-1,0)--(0,0);
\path[draw] (0,0)--(2.2,0);
\path[draw,dashed] (2.2,0)--(3,0);
\path[draw] (3,0)--(4.5,0);
\path[draw,dashed] (4.5,0)--(6.3,0);
\path[draw] (6.3,0)--(8,0);
\path[draw,dashed] (8,0)--(9,0);


\draw[fill=white] (0.5,0) circle (4 pt) node[above=4pt] {} (3.5,0.3) ;
\draw[fill=white] (7.5,0) circle (4 pt) node[above=4pt] {} (3.5,0.3) ;

\path[fill] (0,0) circle (4 pt) node[below=3 pt] {$\ca$};
\fill (1,0) circle (4 pt) node[below =3 pt ] {$\cb$};
\fill (7,0) circle (4 pt) node[below =3pt] {$\cc$};
\fill (8,0) circle (4 pt) node[below=3pt] {$\cd$};
\draw[fill=white] (1.7,0) circle (4pt) node[below = 7pt] {$v_1$};
\draw[fill=white] (2.2,0) circle (4pt) node[below = 7pt] {$v_2$};
\draw[fill=white](3,0) circle (4pt) node[below = 7pt] {$w$};
\draw[fill=white](3.5,0) circle (4pt) node[below = 7pt] {$x$};
\draw[fill=white] (4,0) circle (4pt) node[below = 7pt] {$y$};
\draw[fill=white] (4.5,0) circle (4pt) node[below = 7pt] {$z$};

    \end{tikzpicture}
    \caption{\label{fig:simpl}The sets $S_1, S_2$ and $S_M$ after some deductions}
  \end{figure}

Consider $(r,R)\neq (1,2)$ or $(2,2)$ and $R=2$, then $r\geq 3$ and so $M \geq 5$. 
Let $v_1$ and $v_2$ be the two consecutive vertices in $S_M$ 
following $\cb$ (see Figure~\ref{fig:simpl}). 
We have $d(v_2,\cc)=M - 1 > r$ and $d(v_1,\cc)>r$ so $v_1$ and $v_2$ can only be separated 
by elements of the code on the left of $v_1$ and $v_2$. Let $r_{v_1}$ be the radius that identifies $v_1$.
There must be an element of the code at distance 
exactly $r_{v_1}$ of $v_1$ to separate $v_1$ and $v_2$, and for similar reasons, there must be an element of the code 
at distance $r_{v_1}+1$ of $v_1$ to separate $v_1$ from $\cb$. 
This implies that two elements of the code are consecutives vertices in $\cy_n$, which contradicts (O.1). 
\end{proof}

\vspace{\baselineskip}

We are now able to compute $\wc{r}(\cy_n)$ for all $n\geq 2r+2$. 
Our results are summarized in the following theorem:

\begin{theorem}\label{theo:weakpart1}
  Let $r$ be an integer and $n=(2r+2)p + R$, with $0\leq R \leq 2r+1$ and $p\geq 1$, we have:
  \begin{enumerate}[noitemsep,label=\roman*)]
  \item  if $R=0$, then $\wc{r}(\cy_n)=2p$,
  \item  if $R=1$ or if $r\leq 2$ and $R=2$, then $\wc{r}(\cy_{n})=2p+1$,
  \item  otherwise, $R\geq 2$ and 
$(r,R)\neq (1,2)$ or $(2,2)$, then $\wc{r}(\cy_{n})=2p+2$.
  \end{enumerate}

\end{theorem}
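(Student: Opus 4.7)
The plan is to assemble the three cases of the theorem by combining the bounds from Corollary \ref{cor:weak} (lower bounds depending on $R$), Lemmas \ref{lem:weak1}, \ref{lem:weakex}, \ref{lem:weak12} (constructive upper bounds), and Lemma \ref{lem:weakdif} (a refined lower bound in the exceptional range). No new construction or counting is needed; everything has been prepared by the preceding lemmas, and the argument is essentially bookkeeping.

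First, I would dispatch the easy cases. For $R=0$, Lemma \ref{lem:weak1} gives a weak $r$-code of size $2p$ and proves optimality, yielding $\wc{r}(\cy_n)=2p$. For $R=1$, Corollary \ref{cor:weak} gives the lower bound $|C|\geq 2p+1$, and Lemma \ref{lem:weakex} provides a matching construction, so $\wc{r}(\cy_n)=2p+1$. For $r+2\leq R\leq 2r+1$, Corollary \ref{cor:weak} gives $|C|\geq 2p+2$, and Lemma \ref{lem:weakex} again provides a matching construction, so $\wc{r}(\cy_n)=2p+2$; this settles part of case (iii).

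Next I would handle the delicate range $2\leq R\leq r+1$. Corollary \ref{cor:weak} only gives $|C|\geq 2p+1$ here, but whether this is tight depends on $(r,R)$. For $(r,R)\in\{(1,2),(2,2)\}$, Lemma \ref{lem:weak12} exhibits a weak $r$-code of size $2p+1$, which is optimal by the corollary, establishing the second half of case (ii). For $(r,R)\neq (1,2),(2,2)$ in this range, Lemma \ref{lem:weakdif} rules out any weak $r$-code of size $2p+1$, so $|C|\geq 2p+2$; combined with the construction from Lemma \ref{lem:weakex}, this gives $\wc{r}(\cy_n)=2p+2$ and completes case (iii).

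There is no substantive obstacle: the heavy lifting (the pattern construction in Lemma \ref{lem:weak1}, the block lower bound from Lemma \ref{lem:blocweak}, the special small constructions of Lemma \ref{lem:weak12}, and especially the exclusion argument of Lemma \ref{lem:weakdif}) has already been done. The only care required is to verify that the case split $R=0$, $R=1$, $2\leq R\leq r+1$ with the $(1,2),(2,2)$ exceptions, and $r+2\leq R\leq 2r+1$ is exhaustive and aligns exactly with the three clauses (i)--(iii) of the theorem statement.
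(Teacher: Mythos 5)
Your proposal is correct and matches the paper's intent exactly: the theorem is presented there as a summary of the preceding results, and its proof is precisely the bookkeeping you describe, combining the lower bounds of Corollary \ref{cor:weak} with the constructions of Lemmas \ref{lem:weak1}, \ref{lem:weakex}, \ref{lem:weak12} and the exclusion of Lemma \ref{lem:weakdif}. Your case split ($R=0$; $R=1$; $2\leq R\leq r+1$ with the $(1,2)$, $(2,2)$ exceptions; $r+2\leq R\leq 2r+1$) is exhaustive and aligns with clauses (i)--(iii).
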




The following lemma completes the study for the small cases:

\begin{lemma}\label{lem:small}
  Let $r$ and $n$ be integers with $3\leq n\leq 2r+1$, 
then $\wc{r}(\cy_n)=2$.
\end{lemma}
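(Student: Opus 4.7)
The plan is to establish matching lower and upper bounds of $2$.

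For the lower bound, I will show that no singleton can be a weak $r$-code when $n\geq 3$: if $C=\{c\}$, the two neighbors of $c$ on $\cy_n$ are equidistant from $c$, so for every radius $r'$ their intersections $B_{r'}(\cdot)\cap C$ coincide (either $\emptyset$ or $\{c\}$) and they are never separated by $C$. This gives $\wc{r}(\cy_n)\geq 2$.

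For the upper bound, I will exhibit the explicit code $C=\{v_0,v_1\}$ consisting of two adjacent vertices and verify it is a weak $r$-code. The $r$-domination is immediate: since $n\leq 2r+1$ we have $\lfloor n/2\rfloor\leq r$, and every vertex lies within distance $\lfloor n/2\rfloor$ of $v_0$. Next, each vertex $x$ will be assigned a radius $r_x\in\lb 0,r\rb$ at which its signature is unique. The code vertices $v_0$ and $v_1$ are $0$-identified, since $B_0(v_0)\cap C=\{v_0\}$ and $B_0(v_1)\cap C=\{v_1\}$ are achieved only by $v_0$ and $v_1$. For a non-code vertex $v_i$, let $d_0=d(v_i,v_0)$ and $d_1=d(v_i,v_1)$; since $v_0v_1$ is an edge, $|d_0-d_1|\leq 1$. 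The natural choice is: $r_i=d_0$ with signature $\{v_0\}$ if $d_0<d_1$; $r_i=d_1$ with signature $\{v_1\}$ if $d_0>d_1$; and $r_i=d_0-1$ with signature $\emptyset$ in the remaining case $d_0=d_1$ (which forces $n$ odd and $v_i=v_{(n+1)/2}$).

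The main, but routine, task is then to verify uniqueness of each signature, which I would do by tracing the distance pairs $(d(v_j,v_0),d(v_j,v_1))$ around the cycle: they read $(0,1),(1,0),(2,1),(3,2),\ldots$ on one side and $(1,2),(2,3),\ldots$ on the other, with a unique middle pair $((n-1)/2,(n-1)/2)$ when $n$ is odd. From this list one reads off that in the first case $v_i=v_{n-d_0}$ is the unique vertex within distance $d_0$ of $v_0$ but farther than $d_0$ from $v_1$; symmetrically for the second case; and in the third case $v_{(n+1)/2}$ is the unique vertex at distance $>(n-3)/2$ from both code vertices. Since $d_0,d_1\leq\lfloor n/2\rfloor\leq r$, all chosen radii lie in $\lb 0,r\rb$, and the constraint $n\leq 2r+1$ is exactly what is needed to keep $r_i\leq r$ even for the middle vertex.
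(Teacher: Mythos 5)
Your proposal is correct and follows essentially the same route as the paper: the lower bound comes from the observation that a single code vertex cannot separate its two neighbours, and the upper bound from verifying that two adjacent vertices form a weak $r$-code, with the antipodal vertex (for odd $n$) identified by the empty signature. You merely carry out in detail the verification that the paper leaves as a one-line assertion.
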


\begin{proof}
The code cannot be a single vertex, otherwise its two neighbors are
not $i$-separated for any $i$, so $\wc{r}(\cy_n)\geq 2$.  Two adjacent
vertices form a weak $r$-code for any $r$, so $\wc{r}(\cy_n)=2$. Note
that if $n$ is odd, the antipodal vertex to the code in the cycle is
identified by the empty set.
\end{proof}

\section{Light $r$-codes of cycles}\label{sec:light}

We now study light $r$-codes of the cycle $\cy_n$.
In this section, we will first assume that $n\geq 3r+2$ and we will study the small values of $n$ 
at the end of the section.

\begin{lemma}\label{lem:bloc1}
  Let $C$ be a light $r$-code of $\cy_n$ and $c$ an element of $C$.
There is another element of the code $C$ at distance 
at most
$r+1$ of $c$.
\end{lemma}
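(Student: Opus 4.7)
The plan is to proceed by contradiction. Suppose $c$ is a code vertex with every other element of $C$ at distance at least $r+2$ from $c$. Setting $c=v_0$ without loss of generality, this means that for every $v_a \in C\setminus\{c\}$ we have $r+2 \leq a \leq n-r-2$; the hypothesis $n\geq 3r+2$ is what guarantees that this window is nonempty on the cycle (and more importantly that the two arcs on either side of $c$ do not overlap prematurely).

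I would then show that the two neighbors $v_1$ and $v_{n-1}$ of $c$ cannot be $r'$-separated by $C$ for any $r' \in \lb 0,r \rb$, contradicting the definition of a light $r$-code. First, $v_1, v_{n-1} \notin C$ (since every other code vertex lies at distance $\geq r+2 \geq 2$ from $c$), so the $r'=0$ case gives $B_0(v_1)\cap C = B_0(v_{n-1})\cap C = \emptyset$. For $1 \leq r' \leq r$, the vertex $c$ itself lies in $B_{r'}(v_1) \cap B_{r'}(v_{n-1})$ since it is adjacent to both of them, so the task reduces to checking that no other code vertex $v_a$ can enter either ball.

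The only technical step is the cycle-distance computation. Using $r+2 \leq a \leq n-r-2$ one obtains
\[
d(v_1,v_a) = \min(a-1,\, n-a+1) \geq r+1, \qquad d(v_{n-1},v_a) = \min(a+1,\, n-a-1) \geq r+1,
\]
so $v_a \notin B_{r'}(v_1) \cup B_{r'}(v_{n-1})$ for every $r' \leq r$. Hence $B_{r'}(v_1) \cap C = \{c\} = B_{r'}(v_{n-1}) \cap C$ for every $r' \in \lb 0,r \rb$, which is the contradiction sought.

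I do not expect a genuine obstacle: the whole argument is a short distance count, and the hypothesis $n\geq 3r+2$ precisely supplies the room around $c$ needed for the two inequalities above to go through simultaneously. The only subtlety worth verifying carefully is that the shortest path from $v_1$ (resp.\ $v_{n-1}$) to any faraway code vertex via the ``other direction'' around the cycle is still at least $r+1$, which is handled by taking the minimum over the two arcs as displayed.
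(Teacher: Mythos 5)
Your proof is correct and rests on the same key idea as the paper's: the two neighbors of $c$ must be $r'$-separated for some $r'\le r$, and only a code vertex within distance $r+1$ of $c$ can achieve this. The paper runs the argument directly (a separating vertex $c'$ lies in one neighbor's ball, hence $d(c,c')\le r_{xy}+1\le r+1$) while you take the contrapositive with explicit index arithmetic; note only that the hypothesis $n\ge 3r+2$ is not actually needed -- if the window of admissible positions for other code vertices is empty the contradiction is immediate, since the two neighbors are then never separated.
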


\begin{proof}
Let $x$ and $y$ be the neighbors of $c$. As $C$ is a light $r$-code,
there is an integer $r_{xy}$ such that $0\leq r_{xy} \leq r$ and
$B_{r_{xy}}(x) \cap C \neq B_{r_{xy}}(y) \cap C$.  There consequently
exists a vertex $c'\in C$ such that, w.l.o.g., $c'\in B_{r_{xy}}(x)$
and $c'\notin B_{r_{xy}}(y)$.  Moreover, $c\neq c'$ because
$d(x,c)=d(c,y)=1$. It follows that $d(c^\prime,c) \leq d(c^\prime,x)+d(x,c) \leq
r_{xy}+ 1 \leq r +1$.
\end{proof}

\begin{lemma}\label{lem:bloc2}
Let $S$ be a set of $3r+2$ consecutive vertices on $\cy_n$.
If $C$ is a light $r$-code of $\cy_n$, then $S$ contains at least two elements of $C$.
\end{lemma}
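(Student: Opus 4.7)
The approach is to argue by contradiction. Suppose $C$ is a light $r$-code of $\cy_n$ and, after relabeling, $S=\{v_0,\ldots,v_{3r+1}\}$ contains at most one element of $C$. First, the case $S\cap C=\emptyset$ is immediate: the ball $B_r(v_r)=\{v_0,\ldots,v_{2r}\}$ is contained in $S$, so $B_r(v_r)\cap C=\emptyset$, contradicting $r$-domination. So I may assume $S\cap C=\{v_i\}$ and split on the position $i$.

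For $i\in\lb 0,r \rb$ I would focus on $v_{2r+1}\in S$: its $r$-ball $\{v_{r+1},\ldots,v_{3r+1}\}$ lies inside $S$, while the cyclic distance $d(v_i,v_{2r+1})\geq r+1$. Indeed, the inner arc has length $2r+1-i\geq r+1$ (since $i\leq r$), and the outer arc has length $n-(2r+1-i)\geq (3r+2)-(2r+1) = r+1$ (using $n\geq 3r+2$ and $i\geq 0$). Hence $v_{2r+1}$ admits no $r$-dominator, a contradiction. By the obvious symmetry $v_j\leftrightarrow v_{3r+1-j}$, the case $i\in\lb 2r+1,3r+1 \rb$ is handled identically, using $v_r$ in place of $v_{2r+1}$.

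The delicate case is $i\in\lb r+1,2r \rb$: here $c=v_i$ does $r$-dominate both $v_r$ and $v_{2r+1}$, so the previous argument breaks down. Instead I would invoke Lemma \ref{lem:bloc1}, which forces another code element $c'\neq c$ with $d(c,c')\leq r+1$. For $i\in\lb r+1,2r \rb$ and $n\geq 3r+2$, the ball $B_{r+1}(c)=\{v_{i-r-1},\ldots,v_{i+r+1}\}$ lies entirely in $S$ (no wraparound occurs, since $i-r-1\geq 0$, $i+r+1\leq 3r+1$, and $n\geq 3r+2$ leaves no short outer arc of length $\leq r+1$). So $c'$ would have to belong to $S\cap C=\{c\}$, a contradiction. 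This middle case is where the distance and domination arguments alone do not suffice, and it is the main obstacle; it is precisely the reason Lemma \ref{lem:bloc1} was established just beforehand.
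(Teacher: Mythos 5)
Your proposal is correct and uses the same two ingredients as the paper's proof — the non-domination of a vertex near the far end of $S$ (namely $v_{2r+1}$ or $v_r$) when the code element sits near one end, and Lemma \ref{lem:bloc1} to rule out a code element sitting in the middle of $S$ — merely organized as an explicit three-way case split on $i$ instead of the paper's w.l.o.g.\ reduction. No gap; the argument matches the paper's in substance.
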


\begin{proof}
  Let $C$ be a light $r$-code of $\cy_n$. 
Let us assume there is a set $S$ of $3r+2$ consecutive vertices of $\cy_n$
containing only one element $c$ of $C$.
w.l.o.g., we denote $S=\{v_0,v_1, \dots ,v_{3r+1}\}$ and $c=v_i$ with $i<2r$.
By Lemma \ref{lem:bloc1}, there is an element $c'$ at distance at most
$r+1$ of $c$. But $c' \notin S$ so necessarily, $c' \in \{v_{-1},v_{-2}, \ldots ,v_{-(r+1)}\}$
and $i\leq r$. Then $v_{2r+1}$ is not $r$-dominated by any element of $C$,
a contradiction.
\end{proof}

\bigskip

\noindent It follows from Lemma \ref{lem:bloc2}:

\begin{corollary}\label{cor:light}
Let $C$ be a light $r$-code of $\cy_n$. Then $|C|\geq \left\lceil 2n/(3r+2) \right\rceil$.
\end{corollary}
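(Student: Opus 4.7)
The plan is to mirror the counting argument used for Corollary \ref{cor:weak}, but with the block size $2r+2$ replaced by $3r+2$ dictated by Lemma \ref{lem:bloc2}.

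First I would enumerate the $n$ distinct sets $S_0, S_1, \ldots, S_{n-1}$ of $3r+2$ consecutive vertices on $\cy_n$, where $S_i = \{v_i, v_{i+1}, \ldots, v_{i+3r+1}\}$ (indices modulo $n$). By Lemma \ref{lem:bloc2}, each $S_i$ satisfies $|S_i \cap C| \geq 2$, so
$$\sum_{i=0}^{n-1} |S_i \cap C| \geq 2n.$$

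Then I would compute the same sum by switching the order of summation: each vertex $c \in C$ lies in exactly $3r+2$ of the sets $S_i$ (namely the sets $S_{i}$ with $c \in S_i$, of which there are $3r+2$ since $|S_i|=3r+2$). Hence
$$\sum_{i=0}^{n-1} |S_i \cap C| = (3r+2)|C|.$$
Combining the two displays yields $(3r+2)|C| \geq 2n$, i.e. $|C| \geq 2n/(3r+2)$, and since $|C|$ is an integer this gives $|C| \geq \lceil 2n/(3r+2) \rceil$.

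There is no real obstacle here; the only thing to be careful about is the implicit assumption $n \geq 3r+2$ (stated at the beginning of Section \ref{sec:light}), which ensures that the $n$ sets $S_i$ are genuine sets of $3r+2$ distinct consecutive vertices and that Lemma \ref{lem:bloc2} actually applies to each of them. With that in place the double-counting is immediate.
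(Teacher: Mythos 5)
Your double-counting argument is correct and is exactly the intended proof: the paper leaves Corollary \ref{cor:light} as an immediate consequence of Lemma \ref{lem:bloc2}, and the analogous Corollary \ref{cor:weak} is proved in the paper by the very same argument (sum $|S_i\cap C|\geq 2$ over the $n$ windows, note each code vertex is counted $3r+2$ times). Your remark about the standing assumption $n\geq 3r+2$ is also the right thing to flag; nothing is missing.
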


\noindent In the following, let $n=(3r+2)p+R$ with $0\leq R \leq 3r+1$ and $p\geq 1$ (by assumption, $n\geq 3r+2$). Then Corollary \ref{cor:light} can be reformulated as: 
if $C$ is a light $r$-code of $\cy_n$, then we have
\begin{itemize}[noitemsep]
\item if $R=0$, then $\vert C \vert \geq 2p$,
\item if $0 < 2R \leq 3r+2$, then $\vert C \vert \geq 2p+1$,
\item otherwise, $2R > 3r+2$, and $\vert C \vert \geq 2p+2$.
\end{itemize}

\

We want to exhibit some optimum codes.
\

\begin{lemma}\label{lem:light1}
If  $n=(3r+2)p$, then $\cy_n$ has a light $r$-code
with cardinality $2p$. Moreover this code is optimum.
\end{lemma}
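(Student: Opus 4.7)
The lower bound $|C| \ge 2p$ is immediate from Corollary~\ref{cor:light}, since $\lceil 2n/(3r+2) \rceil = 2p$ when $n = (3r+2)p$. My task therefore reduces to exhibiting a light $r$-code of size exactly $2p$.

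The natural candidate is
\[
C = \{\,v_{i(3r+2)},\; v_{i(3r+2)+r+1} \st 0 \le i \le p-1\,\},
\]
which has cardinality $2p$; consecutive code vertices around $\cy_n$ sit at cyclic distances alternating between $r+1$ and $2r+1$, so the non-code vertices split into alternating ``small'' gaps of $r$ consecutive vertices and ``big'' gaps of $2r$ consecutive vertices. Since the longest gap contains only $2r$ vertices, $C$ is $r$-dominating. For the separation property I would argue by cases on a pair $x \neq y$. If one of $x, y$ belongs to $C$, then radius $0$ already separates them. Otherwise both are non-code; let $c_L(v), c_R(v)$ denote the code vertices bordering $v$'s gap, with $\ell(v) = d(v,c_L(v))$ and $\rho(v) = d(v,c_R(v))$. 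If $x$ and $y$ lie in a common gap and (after relabeling) $\ell(x) < \ell(y)$, then $c_L(x)$ separates them at radius $\ell(x)$ whenever $\ell(x) \le r$; otherwise $\ell(x) > r$ forces the gap to be big and $\rho(y) \le r$, so $c_R(y)$ separates them at radius $\rho(y)$.

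The delicate case is when $x$ and $y$ lie in distinct gaps. I would pick a nearest code vertex $c$ to $x$, so $d(x,c) \le r$: if $d(y,c) \neq d(x,c)$ then $c$ itself separates them at radius $d(x,c)$, while otherwise $x$ and $y$ are equidistant from $c$ on opposite sides of $c$. The key observation is that in our construction each code vertex has a small gap on one side and a big gap on the other, so exactly one of $x, y$ lies in a small gap. A short distance computation using the opposite boundary $c'$ of that small gap---which is at distance at most $r$ from the small-gap vertex but at strictly greater distance from the big-gap vertex, thanks to the asymmetry $r \neq 2r$---then yields separation at some radius at most $r$. I expect this equidistance subcase to be the main obstacle, since it is only the alternation of small and big gaps that prevents two symmetric non-code vertices from remaining indistinguishable through every ball of radius $\le r$; verifying that the $r$-vs-$2r$ asymmetry rules out all hidden twins requires careful bookkeeping of the several candidate separating code vertices at once, and one must also check that the computation still works when $p=1$, in which case the two candidate distances from $c'$ to the big-gap vertex must be examined through both sides of the cycle.
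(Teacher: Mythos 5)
Your construction is the paper's code up to a rotation of indices (code vertices spaced alternately $r+1$ and $2r+1$ apart around the cycle), and your case-by-case verification is essentially the paper's explicit assignment of a separating radius to each pair, including the decisive use of the $r$-versus-$2r$ gap asymmetry and the lower bound from Corollary~\ref{cor:light}. The only slip is that when $d(y,c)<d(x,c)$ the vertex $c$ separates $x$ from $y$ at radius $\min(d(x,c),d(y,c))$ rather than at radius $d(x,c)$; this is immediate to repair and does not affect the argument.
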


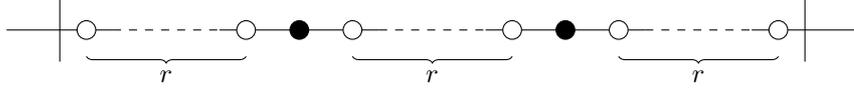
\begin{figure}[h]
\begin{center}
\begin{tikzpicture}[scale = 0.7]
\path[draw] (0,-0.6) -- (0,0.6);
\path[draw] (14,-0.6) -- (14,0.6);

\path[draw] (-1,0)--(1,0);
\path[draw][dashed] (1,0)--(3,0);
\path[draw] (3,0)--(6,0);
\path[draw][dashed] (6,0)--(8,0);
\path[draw] (8,0)--(11,0);
\path[draw][dashed] (11,0)--(13,0);
\path[draw] (13,0)--(15,0);

\path[draw, fill] (4.5,0) circle (5 pt);
\path[draw, fill] (9.5,0) circle (5 pt);
\draw[fill=white] (3.5,0) circle (5pt);
\draw[fill=white] (0.5,0) circle (5pt);
\draw[fill=white] (10.5,0) circle (5 pt);
\draw[fill=white] (13.5,0) circle (5 pt);
\draw[fill=white] (5.5,0) circle (5pt);
\draw[fill=white] (8.5,0) circle (5pt);
\draw[snake=brace,mirror snake] (0.5,-0.5) -- node[below=2pt] {$r$} (3.5,-0.5);
\draw[snake=brace, mirror snake] (5.5,-0.5) -- node[below=2pt] {$r$} (8.5,-0.5);
\draw[snake=brace, mirror snake] (10.5,-0.5) -- node[below=2pt] {$r$} (13.5,-0.5);

\end{tikzpicture}
\caption{\label{fig:motiflight} The pattern $S$ for a light $r$-code in the cycles $\cy_{(3r+2)p}$ with $p \geq 1$}
\end{center}
\end{figure}

\begin{proof}
We construct the code by repeating the pattern $S$ depicted by Figure \ref{fig:motiflight}. 
More precisely, let $C=\{v_i \st i\equiv r \ [3r+2] \mbox{ or }  i\equiv 2r+1 \ [3r+2] \}$. 
\noindent Set  $C$ is a $r$-dominating set of size $2p$ and we just need to check that every pair 
of vertices is separated by $C$ for some radius in $\lb 0,r \rb$. It is sufficient to 
prove it for all pairs $(v_i,v_j)$ in the pattern $S$, \emph{i.e.} with $(i,j) \in \lb 0,3r+1 \rb\times \lb 0,3r+1 \rb$.
W.l.o.g. we study the case $i < j$, and we define $r_{ij}$ as follows:

\begin{itemize}[noitemsep]
\item if $j\leq r$, then $r_{ij}=r-j$;
\item if $i\leq r < j$, then $r_{ij}=|(2r+1)-j|$;
\item if $r<i\leq 2r$, then $r_{ij}=i-r$;
\item if $i\geq 2r+1$, then $r_{ij}=i-(2r+1)$. 
\end{itemize}
 
\noindent Then, $0\leq r_{ij} \leq r$ and it is easy to check that 
$(v_i,v_j)$ is $r_{ij}$-separated by $C$.
So $C$ is a light $r$-code of $\cy_n$ with cardinality $2p$.
This code is optimum by Corollary \ref{cor:light}.
\end{proof}

\bigskip

\noindent We generalize this construction:

\begin{lemma}\label{lem:lightex}
 If $1 \leq R\leq r+1$, then $\cy_n$ has a light $r$-code of cardinality $2p+1$.
  If $R>r+1$, then $\cy_n$ has a light $r$-code of cardinality $2p+2$.
\end{lemma}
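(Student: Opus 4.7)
The plan is to construct the desired codes by extending the pattern $S$ from Lemma~\ref{lem:light1} through the wrap-around region of $\cy_n$. Let
$$C_0 = \{v_i \st 0 \leq i \leq n-1, \ i \equiv r \ [3r+2] \text{ or } i \equiv 2r+1 \ [3r+2]\}.$$
I would set $C = C_0 \cup \{v_{n-1}\}$ when $1 \leq R \leq r$ or $r+2 \leq R \leq 2r+1$, and $C = C_0$ when $R = r+1$ or $2r+2 \leq R \leq 3r+1$. A direct count (noting that $v_{n-1} \in C_0$ exactly when $R-1 \equiv r$ or $R-1 \equiv 2r+1 \ [3r+2]$) yields $|C| = 2p+1$ for $1 \leq R \leq r+1$ and $|C| = 2p+2$ for $R > r+1$, as required.

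To show $r$-domination, I would list the gaps between consecutive code vertices of $C$ around the cycle and verify that each contains at most $2r$ non-code vertices. Inside each complete copy of $S$, the gaps are of sizes $r$ or $2r$, as in Lemma~\ref{lem:light1}. The wrap-around region has to be inspected in each of the four sub-cases above; the inequalities $r + R - 1 \leq 2r$ (for $1 \leq R \leq r+1$) and $R - r - 2 \leq 2r$ (for $R \geq r+2$) ensure that the newly added code vertex, or the code vertices produced by the natural extension of the pattern, split this region into sub-gaps of size at most $2r$.

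For separation, I would argue that any pair $(v_i,v_j)$ lying entirely in an unmodified portion of the cycle is $r_{ij}$-separated by the radii prescribed in the proof of Lemma~\ref{lem:light1}. The remaining pairs are those with at least one endpoint in the wrap-around region, or equal to a newly introduced code vertex. For these pairs I would exhibit an explicit $r_{ij} \in \lb 0,r \rb$ using either the new code vertex $v_{n-1}$, the pattern-extended code $v_{(3r+2)p+r}$, or the adjacent code vertices of the last and first complete patterns, following the same four-region template (one region per code vertex of $S$) as in Lemma~\ref{lem:light1}.

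The main obstacle is the separation verification near the wrap-around. Each of the four sub-cases produces a slightly different local geometry, so the proof unfolds as a short case analysis mirroring the one in Lemma~\ref{lem:light1}; the bookkeeping is routine but tedious, and one has to be careful at the endpoints $R = r$, $R = r+1$, $R = 2r+1$, and $R = 2r+2$, where the structure around the new code vertex changes qualitatively.
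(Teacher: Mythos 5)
Your construction is correct and follows essentially the same route as the paper: extend the periodic pattern of Lemma~\ref{lem:light1} into the remainder of the cycle and add at most one extra code vertex, then verify domination and separation locally near the wrap-around (the paper instead places its one or two extra vertices at $v_{(3r+2)p}$, $v_{(3r+2)p+r}$, $v_{(3r+2)p+2r}$ rather than extending the congruence classes to all of $\lb 0,n-1\rb$, but the resulting gap structure is the same). Like the paper, you leave the separation check as routine; it does go through for your code, since in every case each gap between consecutive code vertices has length at most $2r+1$ and no two consecutive gaps both exceed $r+1$, which together with $r$-domination is enough to separate every pair on a cycle.
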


\begin{proof}
Consider the three following cases: (1) $R\in \lb 1,r+1 \rb$, (2) $R\in
\lb r+2,2r+2 \rb$, and (3) $R \in \lb 2r+3,3r+1 \rb$. For each case, we define the code $C$ as:
\begin{itemize}[noitemsep]
\item[(1)] $C=\{v_i \st i< (3r+2)p,  i\equiv r \ \lb 3r+2 \rb$ or $i\equiv 2r+1 \ \lb 3r+2 \rb\} \cup \{v_{(3r+2)p}\}$
\item[(2)] $C=\{v_i \st i< (3r+2)p,  i\equiv r \ \lb 3r+2 \rb$ or $i\equiv 2r+1 \ \lb 3r+2 \rb\} \cup \{v_{(3r+2)p},v_{(3r+2)p+r}\}$
\item[(3)] $C=\{v_i \st i< (3r+2)p, i\equiv r\  \lb 3r+2 \rb$ or $i\equiv 2r+1 \ \lb 3r+2 \rb\} \cup \{v_{(3r+2)p+r},v_{(3r+2)p+2r}\}$
\end{itemize}
These sets are light $r$-codes of cardinality $2p+1$,
$2p+2$ and $2p+2$, respectively.
\end{proof}

\begin{lemma}\label{lem:lightdi}
 If $R>r+1$, then $\cy_n$ has no light $r$-code of cardinality $2p+1$.
\end{lemma}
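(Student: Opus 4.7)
The plan is to argue by contradiction: suppose that $C$ is a light $r$-code of $\cy_n$ of size $2p+1$. Label the code vertices cyclically $c_1, c_2, \ldots, c_{2p+1}$ and write $g_i = d(c_{i-1}, c_i)$ for the cyclic gaps (indices taken mod $2p+1$), so that $\sum_i g_i = n$. The goal is to extract two constraints on the gap sequence and combine them to force $R \le r+1$, contradicting the hypothesis $R > r+1$.

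The first constraint is $g_i + g_{i+1} \le 3r+2$ for every $i$, obtained by applying Lemma~\ref{lem:bloc2} to the window of $3r+2$ consecutive vertices immediately following $c_{i-1}$: such a window must contain at least two elements of $C$, and the only candidates are $c_i$ (at offset $g_i$) and $c_{i+1}$ (at offset $g_i + g_{i+1}$). The second constraint is $\min(g_i, g_{i+1}) \le r+1$ for every $i$, obtained from Lemma~\ref{lem:bloc1}; in particular at least one gap is $\le r+1$. The plan is then to pick a small gap $g_j \le r+1$, remove it, and partition the remaining $2p$ gaps into $p$ consecutive adjacent pairs $(g_{j+1}, g_{j+2}), (g_{j+3}, g_{j+4}), \ldots$. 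Each such pair sums to at most $3r+2$ by the first constraint, hence
\[
n = g_j + \sum_{k=1}^{p} (g_{j+2k-1} + g_{j+2k}) \le (r+1) + p(3r+2) = (3r+2)p + (r+1),
\]
giving $R \le r+1$ and the desired contradiction.

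The main technical obstacle is establishing the second constraint, since a priori the other code vertex within distance $r+1$ of $c_i$ guaranteed by Lemma~\ref{lem:bloc1} need not be $c_{i-1}$ or $c_{i+1}$: it could instead be reached by wrapping the long way around the cycle, or be a further code vertex $c_{i \pm k}$. To rule this out I will use that the hypotheses $R \ge r+2$ and $p \ge 1$ force $n \ge 4r+4$; combined with $g_i \le 3r+1$ (a consequence of the first constraint since $g_{i\pm 1} \ge 1$), the wrap-around distance $n - g_i \ge r+3$ exceeds $r+1$, and an analogous bound handles longer chains of gaps. Therefore the only way to have a code vertex within distance $r+1$ of $c_i$ is via $c_{i-1}$ or $c_{i+1}$, which forces $\min(g_i, g_{i+1}) \le r+1$ as claimed, and the argument above completes the proof.
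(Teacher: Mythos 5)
Your proof is correct, and it rests on the same two ingredients as the paper's — Lemma~\ref{lem:bloc1} to produce a gap of length at most $r+1$, and Lemma~\ref{lem:bloc2} to bound gaps from below — but it assembles them differently. The paper first upgrades Lemma~\ref{lem:bloc2} into the local statement that any $R$ consecutive vertices contain at most one code vertex (by a pigeonhole argument: two code vertices in such a window would leave only $2p-1$ code vertices for the remaining $(3r+2)p$ vertices, which split into $p$ disjoint windows of size $3r+2$); hence every gap has length at least $R\geq r+2$, and Lemma~\ref{lem:bloc1} then yields an immediate local contradiction, with no summation over the cycle. You instead keep the consequence of Lemma~\ref{lem:bloc2} in the pairwise form $g_i+g_{i+1}\leq 3r+2$ and reach a global contradiction by summing over a pairing of the $2p$ gaps that remain after deleting the short one; this is valid, and the cyclic pairing works precisely because the deleted gap leaves a contiguous block of even length. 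Two small remarks: your concern about the code vertex guaranteed by Lemma~\ref{lem:bloc1} being reached ``the long way around'' can be dispatched more directly — whichever direction the shortest path from $c_i$ to that vertex takes, it passes through $c_{i-1}$ or $c_{i+1}$ first, so $\min(g_i,g_{i+1})\leq r+1$ follows at once without invoking $n\geq 4r+4$; and the conclusion you actually need is only that \emph{some} gap is at most $r+1$, which is weaker than the full second constraint. The paper's version is a bit shorter because its contradiction is witnessed by a single gap that must be simultaneously $\geq R$ and $\leq r+1$.
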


\begin{proof}
Assume that there is a code $C$ of cardinality $2p+1$. 
First observe that in a set $S$ of $R$ consecutive vertices, there is at most one element of the code $C$.
Otherwise, there will be only $2p-1$ elements of the code in the rest 
of the cycle which can be divided in $p$ disjoint sets of size
$3r+2$. One of this set will have only one element of the code,
a contradiction by Lemma \ref{lem:bloc2}.

\noindent Now, take an element $c$ of the code $C$, by Lemma \ref{lem:bloc1} 
there is a vertex $c'$ of the code at distance $d\leq r+1$ of $c$. 
Take the set $S$ of all vertices between $c$ and $c'$, $c$ and $c'$ included. 
$S$ has cardinality at most $r+2\leq R$ and has two vertices of $C$, a contradiction.

\end{proof}

Our results are summarized in the following theorem:

\begin{theorem}\label{thm:light}
    Let $r$ be an integer and $n=(3r+2)p +R$, with $0\leq R < 3r+2$, and $p\geq 1$, we have:    
\begin{enumerate}[noitemsep,label=\roman*)]
\item if $R=0$, then $LC_{r}(\cy_{n})=2p$;
\item if $R\leq r+1$, then $LC_{r}(\cy_{n})=2p+1$;
\item otherwise, $R>r+1$ and then $LC_{r}(\cy_{n})=2p+2$.
\end{enumerate}
\end{theorem}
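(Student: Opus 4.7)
The plan is to prove Theorem~\ref{thm:light} by assembling the preceding lemmas; no new idea is required. For the upper bounds I would invoke the explicit constructions already given, and for the lower bounds I would rely on Corollary~\ref{cor:light} together with Lemma~\ref{lem:lightdi} to close the residual gap in one subcase.

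First I would dispose of the two easy cases. For $R=0$, Corollary~\ref{cor:light} yields $|C|\geq 2p$, and Lemma~\ref{lem:light1} exhibits a light $r$-code of cardinality exactly $2p$, settling case (i). For $1\leq R\leq r+1$ one has $2R \leq 2r+2 \leq 3r+2$, so the reformulation of Corollary~\ref{cor:light} gives the lower bound $|C|\geq 2p+1$, which is matched by the construction of Lemma~\ref{lem:lightex}, giving case (ii).

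The remaining case $R>r+1$ splits into two sub-cases. Lemma~\ref{lem:lightex} always provides a code of cardinality $2p+2$, so only the lower bound is at stake. If additionally $2R>3r+2$, then Corollary~\ref{cor:light} directly gives $|C|\geq \lceil 2n/(3r+2) \rceil = 2p+2$, and we are done. The delicate subcase is $r+1 < R \leq (3r+2)/2$: here the pure counting bound of Corollary~\ref{cor:light} stops at $2p+1$, and one must invoke Lemma~\ref{lem:lightdi} to forbid codes of that cardinality whenever $R>r+1$, which then raises the bound to $2p+2$ and matches the construction.

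The only place where the proof is not pure bookkeeping is this last subcase; it is exactly the point at which the naive density argument is not tight and the structural content of Lemma~\ref{lem:lightdi} — that every code vertex must have another code vertex within distance $r+1$, while $R$ consecutive vertices can contain at most one element of the code — becomes essential.
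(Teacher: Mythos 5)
Your proof is correct and takes essentially the same route as the paper, which likewise derives case (i) from Lemma~\ref{lem:light1}, case (ii) from Corollary~\ref{cor:light} together with Lemma~\ref{lem:lightex}, and case (iii) from Lemmas~\ref{lem:lightex} and~\ref{lem:lightdi}. Your extra subcase split in (iii) (using the counting bound directly when $2R>3r+2$) is harmless but unnecessary, since Lemma~\ref{lem:lightdi} already covers the whole range $R>r+1$.
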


\noindent Theorem \ref{thm:light}-\emph{i} (\emph{resp.} \ref{thm:light}-\emph{ii}, \ref{thm:light}-\emph{iii}) follows from 
Lemma \ref{lem:light1}  
(\emph{resp.} Corollary \ref{cor:light} and Lemma \ref{lem:lightex}, and from 
Lemmas \ref{lem:lightex} and \ref{lem:lightdi}).

\bigskip

The next lemma completes the study for the small values of $n$:

\begin{lemma}
  Let $r$ and $n$ be integers with $3\leq n\leq 3r+1$, then $LC_r(\cy_n)=2$.
\end{lemma}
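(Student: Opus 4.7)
The proof splits into the lower bound $LC_r(\cy_n) \geq 2$ and an explicit construction achieving equality.

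For the lower bound I would follow the argument of Lemma~\ref{lem:small}: a singleton code $\{c\}$ cannot light $r$-separate the two neighbors of $c$, since at every radius $r' \in \lb 0, r \rb$ both neighbors have the same intersection with $\{c\}$ ($\emptyset$ at $r' = 0$ and $\{c\}$ for $r' \geq 1$). Hence no singleton is a light $r$-code of $\cy_n$.

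For the upper bound I would exhibit a two-vertex light $r$-code $C = \{v_0, v_\ell\}$ with $\ell$ chosen according to $n$: take $\ell = 1$ when $3 \leq n \leq 2r+2$, and $\ell = r$ when $2r+3 \leq n \leq 3r+1$. In each case the two arcs between the codewords have length at most $2r+1$, which is precisely the condition for every internal vertex of an arc to be within distance $r$ of an endpoint; thus $C$ is $r$-dominating.

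For separation I would study the truncated profile $P(v) = (\min(d(v, v_0), r+1), \min(d(v, v_\ell), r+1))$, which encodes exactly what is recoverable from radii $0, \ldots, r$, and show $P$ is injective. The main obstacle, and the heart of the argument, is ruling out collisions under truncation in the second case. Two distinct vertices $v \neq w$ with $P(v) = P(w)$ would have to be reflections across one codeword with truncation at the other. With $\ell = r$, one of the two symmetric candidates across $v_r$ always lies in $\{v_0, \ldots, v_r\} \subseteq B_r(v_0)$, and analogously a reflection across $v_0$ has one candidate in $B_r(v_r) = \{v_0, \ldots, v_{2r}\}$, preventing simultaneous truncation. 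The asymmetry $r \neq n - r$ in the two arcs, ensured by $n \geq 2r+3 > 2r$, rules out the remaining case of a collision with both coordinates in $\lb 0, r \rb$.
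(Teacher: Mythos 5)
Your proposal is correct and follows essentially the same route as the paper: the lower bound comes from the fact that a singleton cannot separate its two neighbours, and the upper bound from a two-vertex code, placed adjacently when $n\le 2r+2$ and spread apart otherwise. The only differences are that the paper puts the second codeword at distance $r+1$ rather than $r$ in the regime $n>2r+2$ (both choices work), and that the paper simply asserts the constructions are light $r$-codes, whereas you verify separation explicitly via the truncated distance profile.
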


\begin{proof}
A light $r$-code cannot be a single vertex otherwise the neighbors of
the element of the code are not $i$-separated for any $i$.
Two adjacent vertices form a light $r$-code for any $n\leq 2r+2$. For $n>2r+2$, take two vertices at distance $r+1$.
\end{proof}

\bigskip

With light $r$-codes, we can assign up to $r+1$ radii to a vertex to separate it 
from all the other vertices. 
Actually, for cycles, we just need three radii:

\begin{proposition} \label{prop:light3}
Let $C$ be a light $r$-code of $\cy_n$ and $x$ be a vertex of $\cy_n$. Assume that $n> 2r+1$. 
There is a subset  $R_x$ of $\lb 0,r \rb$ of size at most $3$ such that for all other vertices $y$ of $\cy_n$, there is $r_{xy}\in R_x$ such that $B_{r_{xy}}(x)\cap C \neq B_{r_{xy}}(y)\cap C$.
\end{proposition}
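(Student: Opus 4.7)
The plan is as follows. Label $x=v_0$ and denote the remaining vertices by $v_{\pm 1}, v_{\pm 2},\ldots$ in cyclic order, with $v_k$ the $k$-th vertex clockwise from $v_0$ so that $d(v_0,v_k)=|k|$ for $|k|\leq \lfloor n/2\rfloor$. Let $d_L$ (resp.\ $d_R$) be the distance from $v_0$ to the nearest element of $C$ going left (resp.\ right), with the convention $d_L=d_R=0$ if $v_0\in C$. I will construct an explicit $R_x\subseteq \lb 0,r\rb$ of size at most three.

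First I put $r$ into $R_x$. Since $n>2r+1$, for any $y$ with $d(v_0,y)\geq 2r+1$ the balls $B_r(v_0)$ and $B_r(y)$ are disjoint, and by $r$-domination both meet $C$, so $B_r(v_0)\cap C$ and $B_r(y)\cap C$ are disjoint nonempty sets and hence unequal. This reduces the remaining problem to vertices in $B_{2r}(v_0)\setminus\{v_0\}$.

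The key computation is: for every $r'\in\lb 0,r\rb$ and every positive-side vertex $v_k$ (i.e.\ $1\leq k\leq\lfloor n/2\rfloor$), the vertex $v_{-r'}$ lies in $B_{r'}(v_0)\triangle B_{r'}(v_k)$. Indeed $d(v_{-r'},v_0)=r'$ places it in $B_{r'}(v_0)$, while $d(v_{-r'},v_k)=\min(r'+k,\,n-r'-k)>r'$ in the range under consideration (using $n>2r+1$ so that the shortest cyclic path between $v_{-r'}$ and $v_k$ passes through $v_0$; the antipodal boundary case is handled symmetrically with the negative side). Symmetrically, $v_{r'}$ always lies in $B_{r'}(v_0)\triangle B_{r'}(v_{-k})$ for $1\leq k\leq\lfloor n/2\rfloor$. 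Consequently, when $d_L\leq r$ the choice $r'=d_L$ places the code vertex $v_{-d_L}=c_L$ into the symmetric difference for every positive-side $v_k$, so $d_L$ alone separates $v_0$ from all of them; symmetrically $d_R\leq r$ handles the negative-side vertices. I therefore set
\[
R_x \;=\; \{d_L,\,d_R,\,r\}\cap\lb 0,r\rb,
\]
a set of size at most three. In the generic case $d_L,d_R\leq r$, this immediately works.

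The main obstacle is the case $\max(d_L,d_R)>r$, say $d_L>r$. Then $v_0\notin C$, the $r$-domination of $v_0$ forces $d_R\leq r$, and by Lemma~\ref{lem:bloc1} the next code element $c_R'$ on the right lies at distance at most $r+1$ from $c_R$. The excluded value $d_L$ must be replaced by a tailored radius $r'\leq r$ for which one of the secondary sym-diff positions $v_{r'+1}$ (present in the sym diff whenever $k\leq 2r'+1$) or $v_{r'}$ (present whenever $k\geq 2r'+1$) is a code element, typically $c_R$ or $c_R'$; the far positive vertices not covered by this $r'$ are handled by the radius $r$ already in $R_x$. Making this substitution rigorous is the technical heart of the proof: one must check that the three chosen radii jointly cover every positive-side $v_k$, which boils down to ruling out, via the light-code hypothesis together with Lemmas~\ref{lem:bloc1} and \ref{lem:bloc2}, the existence of a $v_k$ indistinguishable from $v_0$ at every radius in $\lb 0,r\rb$.
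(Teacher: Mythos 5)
Your first case (a code vertex within distance $r$ of $x$ on each side) is essentially the paper's first case and is fine, modulo a boundary issue you share with the paper near the antipode when $n$ is close to $2r+2$. But the second case --- all code vertices near $x$ lying on one side --- is only sketched, and you say so yourself (``Making this substitution rigorous is the technical heart of the proof''). That admission is accurate: as written, the argument reduces to ``one must rule out a $v_k$ indistinguishable from $v_0$ at every radius,'' which is a restatement of the proposition, not a proof. There is also a concrete mismatch in how you propose to patch it: the radius $r$ you placed in $R_x$ only disposes of vertices at distance at least $2r+1$ from $x$ (via disjoint balls), so vertices at intermediate distances --- too far for your tailored $r'$ but closer than $2r+1$ --- are not covered by anything in your three-element set.

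The paper closes this case with a short explicit argument you could adopt. Let $a=v_i$ be the code vertex closest to $x=v_0$, with $0<i\leq r$ (so $B_{i-1}(x)\cap C=\emptyset$ and $B_i(x)\cap C=\{a\}$). Radius $i$ separates $x$ from every $y\notin B_i(a)$, since $a$ lies in $B_i(x)$ but not in $B_i(y)$; this already covers all far vertices, making the extra radius $r$ unnecessary. Radius $i-1$ separates $x$ from every $y\in B_{i-1}(a)$, since $B_{i-1}(y)\cap C$ contains $a$ while $B_{i-1}(x)\cap C$ is empty. The only vertex left is $v_{2i}$, and Lemma~\ref{lem:bloc1} supplies a second code vertex $b=v_j$ with $i<j\leq i+r+1$, whence $d(v_{2i},b)\leq r$ and this radius separates $v_{2i}$ from $x$ because $b\in B_{d(v_{2i},b)}(v_{2i})\setminus B_{d(v_{2i},b)}(x)$. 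So the correct set is $R_x=\{i,\,i-1,\,d(v_{2i},b)\}$, not $\{d_L,d_R,r\}$ with an ad hoc replacement.
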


\begin{proof}
Without loss of generality, we can assume that $x=v_0$.

Assume first that there exist two vertices of the code, say $a=v_i$ and $b=v_j$, such that $-r\leq i \leq 0
 \leq j \leq r$ (if $x \in C$, then we have $a=b=x$). Thus $R_x=\{d(x,a),d(x,b)\}$ separates $x$ from all the other vertices:
vertices $x$ and $v_k$ are separated for radius $d(x,a)$ if $0<k<n/2$ and for radius $d(x,b)$ if $-n/2<k<0$.

Otherwise, let $a=v_i$ be the element of the code closest to $x$. We can assume that $0<i\leq r$. 
By Lemma \ref{lem:bloc1} we know that there exists another element of the code $b=v_j$ such that $i<j$ and $j-i\leq r+1$.
Then $x$ is separated from all vertices not in $B_i(a)$ by radius $i$, and from all vertices in $B_{i-1}(a)$ by radius $i-1$.
It remains one vertex, $v_{2i}$, that is separated from $x$ for radius $d(v_{2i},b)\leq r$. 
Finally the three radii $i$, $i-1$, $d(v_{2i},b)$ are enough to separate $x$ 
from all vertices.
\end{proof}

\vspace{\baselineskip}

This proposition leads to the following question: what is the size of an optimum light 
$r$-code on $\cy_n$ 
that need to assign only $2$ radii to each vertex? We solve this question in the next section.

\section{Codes with 2 radii}\label{sec:2radius}

A \emph{$(2,\lb 0,r \rb)$-code} $C$ of a graph $G$ is a subset of vertices of $G$ that $r$-dominates every vertex 
and such that for each vertex $x$, we can assign a set $R_{x}=\{r_x,r_x'\}$ 
of integers in $\lb 0,r \rb$ such that every pair of distinct vertices $(x,y)$ is $r_x$ or $r'_x$-separated by $C$.

\begin{lemma}\label{lem:2codeex}
Let $k=\left \lfloor (r+1)/3 \right \rfloor$ and $s=3r-k+2$. If $s$ divides $n$, 
then the code defined by repeating the pattern $S$ depicted by 
Figure \ref{fig:motif2code} is a $(2,\lb 0,r \rb)$-code of $\cy_n$.
\end{lemma}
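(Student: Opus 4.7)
The plan is to verify directly that the code $C$ obtained by repeating the pattern $S$ is a $(2,\lb 0,r \rb)$-code of $\cy_n$. This requires two things: (a) that $C$ is an $r$-dominating set, and (b) that to each vertex $x$ of $\cy_n$ one can associate a pair $R_x = \{r_x, r_x'\} \subseteq \lb 0,r \rb$ such that every other vertex $y$ is $r_x$- or $r_x'$-separated from $x$ by $C$. Because $C$ is invariant under the rotation of $\cy_n$ by $s = 3r-k+2$, it is enough to exhibit $R_x$ for each of the $s$ vertices of a single copy of $S$, and then check separation against an arbitrary $y \in \cy_n$.

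For step (a), I would read off the two gaps $a$ and $b = s - a$ between consecutive code vertices in the pattern $S$ and verify that $a, b \leq 2r+1$, which is sufficient for $r$-domination on a cycle. For step (b), I would assign the two radii to each vertex $x$ in a geometrically natural way: the first radius $r_x$ is chosen as the distance from $x$ to the closer of the two code vertices of $S$ containing $x$ in its ``dominating gap'', since that radius alone already separates $x$ from every vertex lying on the far side of the nearest code vertex; the second radius $r_x'$ is chosen to compensate for the small number of residual pairs $(x,y)$ that $r_x$ cannot separate, typically by taking the distance from $x$ to the second-nearest code vertex on the opposite side, or a suitable neighbour thereof. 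The verification that $R_x$ distinguishes $x$ from all other $y$ then proceeds by a case analysis based on the position of $x$ in the pattern (inside the short gap between the two code vertices of $S$, or inside the longer gap joining consecutive copies of $S$) and on the position of $y$ relative to $x$ (same gap, neighbouring gap, or further away). The case where $y$ lies far from $x$ reduces to the light-code separation already established in the proof of Lemma \ref{lem:light1}, adapted to the denser pattern.

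The main obstacle, and the technical heart of the proof, is the case where $x$ sits near the middle of a long gap. In the general light-code setting, Proposition \ref{prop:light3} shows that \emph{three} radii are always sufficient, and the proof of that proposition makes clear why two are typically \emph{not} enough: the vertex $x$ must be distinguished from two (almost) symmetric vertices as well as from a further ``shifted'' vertex in the adjacent gap, and generically these three separations require three distinct radii. The value $k = \lfloor (r+1)/3 \rfloor$ is precisely calibrated so that, after deleting $k$ vertices from each period of the optimum light pattern of Lemma \ref{lem:light1} (yielding a period of length $3r-k+2$ instead of $3r+2$), the three radii needed in the general case collapse into two: one of the ``asymmetric'' radii becomes equal to the ``symmetric'' radius for every choice of $x$ in the long gap. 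I would handle this by writing out the distances explicitly in terms of $r$ and $k$, and checking the resulting integer equality; the inequality $3k \leq r+1$ built into the definition of $k$ is what keeps every resulting radius inside $\lb 0, r \rb$.

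Finally, once the case analysis is complete for one representative period, the rotational symmetry of $C$ extends the conclusion to all of $\cy_n$, proving that $C$ is indeed a $(2, \lb 0, r \rb)$-code.
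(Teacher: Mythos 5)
Your outline follows the paper's proof: restrict to one period by rotational symmetry, split that period into classes according to the position of $x$ relative to the two code vertices $c_1$ and $c_2$ of the pattern, assign an explicit pair of radii to each class, and invoke the definition of $k$ in the single hard case. You have also correctly located the difficulty (a vertex deep inside the long gap) and the reason $k=\lfloor (r+1)/3\rfloor$ is exactly how much one may shrink the light-code pattern of Lemma~\ref{lem:light1}. So the strategy is the right one and coincides with the paper's.

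However, the proposal defers precisely the content that constitutes the proof, and the one place where you commit to a concrete mechanism is off. First, the hard case is not resolved by an ``integer equality'' collapsing two radii into one: writing $c_1=v_0$, $c_2=v_{r-k+1}$ and taking $x=v_{-i}$ in the long gap with $k+1\leq i\leq r$, the radii $i$ and $i-1$ separate $x$ from everything except its reflection $y=v_{i}$ through $c_1$, and what saves you is the \emph{inequality} $d(y,c_2)=r-k+1-i\leq i<d(x,c_2)$, so that $c_2\in B_i(y)\setminus B_i(x)$ and the radius $i$ you already have does the job. The binding instance $i=k+1$ requires $r-2k\leq k+1$, i.e.\ $r\leq 3k+1$; this is the \emph{lower} bound $3k\geq r-1$ on $k$, not the inequality $3k\leq r+1$ you invoke (that upper bound is what the matching lower-bound lemma needs, via $r-2k+2\geq k+1$). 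Getting the direction of this inequality right is the whole point of the choice of $k$, so this is not a cosmetic slip. Second, you never exhibit the second radius for the remaining classes — the vertices between $c_1$ and $c_2$ and the $k$ vertices flanking them on the long-gap side — and these assignments are not uniform: the paper needs five distinct cases, with pairs such as $\{d(x,c_1),d(x,c_1)-1\}$, $\{d(x,c_1),d(x,c_2)-1\}$ and $\{d(x,c_1),d(x,c_2)\}$, each requiring its own (short) verification. Until those assignments are written down and checked against an arbitrary $y$, the lemma is not proved; the checking is routine, but the assignment itself is the statement.
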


\begin{figure}[h]
\begin{center}
\begin{tikzpicture}[scale = 0.7]
\path[draw] (0,-0.6) -- (0,0.6);
\path[draw] (13,-0.6) -- (13,0.6);

\path[draw] (-1,0)--(1,0);
\path[draw][dashed] (1,0)--(3,0);
\path[draw] (3,0)--(6,0);
\path[draw][dashed] (6,0)--(7,0);
\path[draw] (7,0)--(10,0);
\path[draw][dashed] (10,0)--(12,0);
\path[draw] (12,0)--(14,0);

\path[draw, fill] (4.5,0) circle (5 pt); 
\path[draw, fill] (8.5,0) circle (5 pt);
\draw[fill=white] (3.5,0) circle (5pt);
\draw[fill=white] (0.5,0) circle (5pt);
\draw[fill=white] (9.5,0) circle (5 pt);
\draw[fill=white] (12.5,0) circle (5 pt);
\draw[fill=white] (5.5,0) circle (5pt);
\draw[fill=white] (7.5,0) circle (5pt);
\draw[snake=brace,mirror snake] (0.5,-0.5) -- node[below=2pt] {$r$} (3.5,-0.5);
\draw[snake=brace, mirror snake] (5.5,-0.5) -- node[below] {$r-k$} (7.5,-0.5);
\draw[snake=brace, mirror snake] (9.5,-0.5) -- node[below=2pt] {$r$} (12.5,-0.5);

\end{tikzpicture}
\caption{\label{fig:motif2code} The pattern $S$ for a $(2,\lb 0,r \rb)$-code of the cycle $\cy_n$ with $n$ multiple of $s$ (\emph{cf.} Lemma \ref{lem:2codeex})}
\end{center}
\end{figure}

\begin{proof}
\def\ca{c_1}
\def\cb{c_2}
We focalize on a pattern $S$. Denote by $\ca$ and $\cb$ the two
vertices of the code of $S$ and assume that $\ca=v_{0}$. Then
$\cb=v_{r-k+1}$ and the vertices of $S$ are the vertices between
$v_{-r}$ and $v_{2r-k+1}$. Partition the vertices of $S$ in five
subsets: $A_{1}=\{v_{-r}, \ldots, v_{-k-1}\}$, $A_{2}=\{v_{-k}, \ldots, v_{-1}\}$, $A_{3}=\{v_{0}, \ldots, v_{r-k+1}\}$, $A_{4}=\{v_{r-k+2}, \ldots, v_{r+1}\}$ and $A_{5}=\{v_{r+2}, \ldots, v_{2r-k+1}\}$. If $r=1$, then $A_2$ and $A_4$ are empty; if $r=0$, then $A_3$ is non empty and the other sets are empty.
Let $x$ be a vertex of $S$. Let $R_{x}$ the set of radii associated to
$x$:
\begin{itemize}[noitemsep]
\item if $x\in A_1$, then set $R_x=\{d(x,\ca), d(x,\ca)-1\}$;
\item if $x \in A_{2}$, then set $R_x=\{d(x,\ca), d(x,\cb)-1\}$;
\item if $x \in A_{3}$, then set $R_x=\{d(x,\ca), d(x,\cb)\}$; 
\item if $x \in A_{4}$, then set $R_x=\{d(x,\ca)-1, d(x,\cb)\}$;
\item if $x \in A_{5}$, then set $R_x = \{d(x,\cb), d(x,\cb)-1\}$.
\end{itemize}

One can check that $R_{x} \subset \lb 0,r \rb$ in all cases.  By symmetry,
we just need to check that every vertex $x$ of $A_{1}\cup A_{2}\cup
A_{3}$ is separated from all the other vertices for a radius in
$R_{x}$.

\vspace{.5em}
    
If $x \in A_{1}$, then $x$ is separated from the vertices not in
$B_{d(x,\ca)}(\ca)$ for radius $d(x,\ca)$ and from the vertices in
$B_{d(x,\ca)-1}(\ca)$ for radius $d(x,\ca)-1$. Remains the vertex $y$
at distance $d(x,\ca)$ of $\ca$. If $x=v_{-i}$, with $k+1 \leq i \leq
r$, then $y=v_{i}$ and $d(y,\cb)=r-k+1-i \leq r-2k \leq k+1 \leq
d(x,\ca)$ by definition of $k$. Notice that $d(x,\cb)>d(x,\ca)$, so
$x$ and $y$ are separated for radius $d(x,\ca)$.

\vspace{.5em}    
    
If $x \in A_{2}$, then $x$ is separated from the vertices not in
$B_{d(x,\ca)}(\ca)$ for radius $d(x,\ca)$ and from the vertices in
$B_{d(x,\cb)-1}(\ca)$ for radius $d(x,\cb)-1$. That covers all the
vertices of the cycle.
    
\vspace{.5em}

One can check by the same kind of arguments that $x \in A_{3}$ is also
separated from all the other vertices for $d(x,\ca)$ or $d(x,\cb)$.
\end{proof}

\begin{lemma}
Let $C$ be a $(2,\lb 0,r \rb)$-code of $\cy_n$. Let $S$ be a set of $s=3r-k+2$ vertices with $k=\lfloor (r+1)/3\rfloor$.
Then $S$ contains at least two vertices of $C$.
\end{lemma}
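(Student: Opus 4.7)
My plan is to argue by contradiction: suppose a set $S$ of $s = 3r-k+2$ consecutive vertices, written as $\{v_0, \dots, v_{s-1}\}$, contains at most one element of $C$. The easy half is $S \cap C = \emptyset$: since $s \geq 2r+2$, the vertex $v_r \in S$ sits at distance at least $r+1$ from every code element (all of which lie outside $S$), contradicting the $r$-domination property of $C$.

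The substantive case is $S \cap C = \{c\}$ with $c = v_i$. Applying Lemma~\ref{lem:bloc1} produces another code $c'$ with $d(c,c') \leq r+1$; since $c' \notin S$, by reflecting the cycle I may assume $c' = v_{i-a}$ with $1 \leq a \leq r+1$ and $a \geq i+1$, which forces $i \leq r$. Because $c$ cannot $r$-dominate $v_{s-1}$, there must also be a code $\tilde{c} = v_{s-1+b}$ on the other side of $S$ with $1 \leq b \leq r$; requiring $B_r(c) \cup B_r(\tilde{c}) \supseteq S$ then gives $i \geq r-k+b$, so $i \in \{r-k+1, \dots, r\}$ and $a \geq r-k+2$. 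Using $k = \lfloor (r+1)/3 \rfloor$, equivalently $r \geq 3k-1$, I will check that the integer interval $[r+2-a,\ \lfloor (a-1)/2 \rfloor]$ is non-empty, and pick some $j$ in it.

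The core of the proof is to examine the vertex $y := v_{i+j}$ together with its mirror $z := v_{i-j}$. The lower bound $j \geq r+2-a$ pushes the candidate radius arising from $c'$ above $r$, and the bounds on $b$ and $i$ push $\tilde{c}$ and all more distant codes out of range as well, so only the code $c$ can contribute a separating radius in $\lb 0,r \rb$ for the pairs $(y, v_{i+j-1})$ and $(y, v_{i+j+1})$. This forces $R_y = \{j-1,\, j\}$. On the other hand, the upper bound $j \leq \lfloor (a-1)/2 \rfloor$ gives $a-j \geq j+1$, so $c' \notin B_j(z)$, and a short verification then yields $B_{j-1}(y) \cap C = B_{j-1}(z) \cap C = \emptyset$ and $B_j(y) \cap C = B_j(z) \cap C = \{c\}$. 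Hence $R_y$ does not separate $y$ from $z$, contradicting the $(2, \lb 0,r \rb)$-property. The main obstacle I anticipate is the bookkeeping required to confirm that none of the more distant code elements (those beyond $c'$ or beyond $\tilde{c}$) slip into $B_{j-1}$ or $B_j$ centred at $y$ or $z$; this is precisely where the inequality $r \geq 3k-1$ coming from the precise value of $k$ comes into play.
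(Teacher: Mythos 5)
Your argument is correct and follows essentially the same route as the paper's proof: both pin the unique code vertex $c$ of $S$ near one end of $S$ (the paper bounds the distance from $c$ to the near end of $S$ between $r-k+1$ and $r$, exactly as you bound $i$), force the two radii of a vertex at distance $j$ from $c$ on the long side via its two neighbours, and derive the contradiction from the unseparated mirror vertex at distance $j$ on the short side --- the paper simply fixes $j=k$, which lies in your interval $[r+2-a,\lfloor (a-1)/2\rfloor]$. The one point to make explicit is that $c'$ and $\tilde c$ must be taken as the \emph{nearest} code vertices beyond each end of $S$ (otherwise an intervening code vertex could enter $B_{r}(y)$ and hand $y$ an extra separating radius); with that word added, the bookkeeping you anticipate does go through.
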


\begin{proof}
  For $r=0$, the lemma is true as all the vertices must be
  $0$-dominated. The lemma is also true for $r=1$, as a $(2,\lb 0,1
  \rb)$-code is a light $1$-code. Now, let $r \geq 2$. Notice that
  $3r-k+2 > 2r$, thus $S$ contains at least one vertex of $C$. By
  contradiction, assume that $S$ contains only one vertex $c$ of $C$,
  and w.l.o.g. assume $c=v_{0}$.
Let $v_{-a}$ be the first
  vertex of $S$ and $v_{b}$ be the last vertex of $S$, $a+b=3r-k+1$.
  We can assume that $a\leq b$. $C$ is also a light $r$-code so by
  Lemma \ref{lem:bloc1} $a\leq r$, then $b\geq 2r-k+1$. $C$ is
  $r$-dominating so $b\leq 2r$, and then $a\geq r-k+1$.  We have
  $B_{r}(v_k)\cap C=B_{r}(v_{k-1})\cap C=B_{r}(v_{k+1})\cap C=\{c\}$
  because $d(v_k,v_{-a})=a+k\geq r+1$ and $d(v_k,v_{b})=b-k\geq
  2r-2k+1 \geq r+1$.  Then, $v_k$ and $v_{k-1}$ are only separated for
  radius $k-1$, $v_k$ and $v_{k+1}$ are only separated for radius
  $k$. So necessarily $v_{k}$ and $v_{-k}$ must be separated for
  radius $k$ or $k-1$. That means there is a vertex of the code $c'
  \not \in S$ different of $c$ at distance at most $k$ of $v_{-k}$.
  But $d(c',v_{-k})=d(c',v_{-a})+d(v_{-a},v_{-k})\geq 1+a-k \geq
  r-2k+2 \geq k+1$ (by definition of $k$), a contradiction.
    \end{proof}

\bigskip

As corollary, the code of Lemma \ref{lem:2codeex} is optimum and we have the following lower bound, 
as for light and weak codes:

\begin{corollary}\label{cor:2code}
 Let $C$ be a $(2,\lb 0,r \rb)$-code of $\cy_n$. Then $|C|\geq \left\lceil2n/s\right\rceil$ with $s=3r-\left \lfloor (r+1)/3 \right \rfloor+2$ .
\end{corollary}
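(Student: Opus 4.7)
The plan is to repeat verbatim the double counting argument already used to derive Corollary \ref{cor:weak} from Lemma \ref{lem:blocweak}, and Corollary \ref{cor:light} from Lemma \ref{lem:bloc2}. Namely, I would enumerate the $n$ distinct sets of $s$ consecutive vertices along the cycle (one such window starting at each vertex $v_i$), and count pairs (window, code vertex) in two different ways.

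By the preceding lemma, each of the $n$ windows contains at least $2$ vertices of $C$, so the total number of such pairs is at least $2n$. On the other hand, any fixed vertex $c \in C$ belongs to exactly $s$ of the windows (those starting at $v_{i-s+1}, v_{i-s+2}, \ldots, v_i$ if $c = v_i$, with indices taken modulo $n$). Counting pairs this second way gives $s \cdot |C|$. Combining the two counts yields $s \cdot |C| \geq 2n$, and dividing through (using the fact that $|C|$ is an integer) gives $|C| \geq \lceil 2n/s \rceil$, which is the claimed bound.

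There is no real obstacle here: the statement is a formal corollary of the preceding lemma, and all the technical work (in particular the delicate case analysis showing that any $s$ consecutive vertices contain at least two code vertices) has already been done. The only mild caveat is to make sure the window-count $n$ is correct when $n < s$, but the statement is implicitly in the regime where $\cy_n$ admits such a code, and in any case if $n < s$ the claim reads $|C| \geq \lceil 2n/s \rceil \leq 2$, which is trivially implied by the fact that any $(2,\lb 0,r\rb)$-code must have at least two vertices (a single vertex cannot separate its two neighbors, as noted in the proof of Lemma \ref{lem:small} and its light analogue).
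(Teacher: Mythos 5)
Your proof is correct and matches the paper's intent exactly: the paper omits an explicit proof, remarking only that the bound follows ``as for light and weak codes,'' i.e.\ by the same double counting of (window, code vertex) pairs over the $n$ windows of $s$ consecutive vertices that proves Corollaries \ref{cor:weak} and \ref{cor:light}. Your handling of the degenerate $n<s$ case is a reasonable extra precaution not present in the paper.
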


It remains the case where $s$ does not divide $n$, 
with similar arguments used for light codes, one can show that:

\begin{theorem}\label{thm:2code}
Let $n,r,s,p,R$ be integers, set $k=\lfloor (r+1)/3 \rfloor$, $s=3r-k+2$ and
$n=sp+R$, with $0\leq R < s$. Then the size of an optimum $(2,\lb 0,r \rb)$-code of $\cy_n$ is:
\begin{enumerate}[noitemsep,label=\roman*)]
\item $2p$ if $R=0$;
\item $2p+1$ if $R\leq r+1$;
\item $2p+2$ otherwise.
\end{enumerate}

\end{theorem}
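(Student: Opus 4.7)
The plan is to mirror the treatment of Theorem \ref{thm:light}: Corollary \ref{cor:2code} handles cases (i) and (ii) directly, case (iii) requires a separate argument in the spirit of Lemma \ref{lem:lightdi}, and the matching upper bounds come from the periodic construction of Lemma \ref{lem:2codeex} padded with one or two extra vertices inside the residual arc of length $R$.

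For the lower bounds, first note that $k=\lfloor (r+1)/3\rfloor\le r$, hence $s=3r-k+2\ge 2(r+1)$. In case (i), Corollary \ref{cor:2code} yields $|C|\ge \lceil 2sp/s\rceil=2p$. In case (ii), the hypothesis $1\le R\le r+1$ gives $0<2R/s\le 1$, so $\lceil 2n/s\rceil=2p+1$. Case (iii) is the delicate one and I would argue by contradiction that no $(2,\lb 0,r\rb)$-code of size $2p+1$ can exist. Assuming $|C|=2p+1$, I first observe that any arc of $R$ consecutive vertices contains at most one element of $C$: otherwise the complementary arc of $sp$ consecutive vertices would hold at most $2p-1$ elements, contradicting the block lemma (stated just before Corollary \ref{cor:2code}) once that arc is partitioned into $p$ disjoint blocks of size $s$. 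Since any $(2,\lb 0,r\rb)$-code is in particular a light $r$-code, Lemma \ref{lem:bloc1} then applies: for any $c\in C$ there exists a distinct $c'\in C$ with $d(c,c')\le r+1$, hence the arc from $c$ to $c'$ inclusive has at most $r+2\le R$ vertices and contains two code elements, a contradiction.

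For the upper bounds, case (i) is exactly Lemma \ref{lem:2codeex}. In cases (ii) and (iii) I would repeat the periodic pattern of Lemma \ref{lem:2codeex} over the first $sp$ vertices and then insert extra vertices in the residual arc: one vertex when $R\le r+1$ and two vertices when $R>r+1$, positioned analogously to the three subcases of Lemma \ref{lem:lightex} so that the residual arc is itself $r$-dominated and compatible with the periodic pattern. The radii sets $R_x$ assigned in Lemma \ref{lem:2codeex} are kept unchanged for vertices far from the residual arc, and adapted on the boundary by substituting the distance to a newly added code vertex in place of the distance to the absent periodic one.

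The main obstacle will be the upper bound in case (iii): after inserting two extra vertices into a residual arc of length $R>r+1$, one must verify vertex by vertex in the neighborhoods of both boundaries that two radii still suffice to separate every vertex from all the others, mimicking the case analysis over the five subsets $A_1,\dots,A_5$ of Lemma \ref{lem:2codeex}. The lower bound in case (iii) is a clean transplant of the light-code argument, and the lower bounds of cases (i) and (ii) reduce directly to block counting.
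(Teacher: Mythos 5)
Your proposal follows exactly the route the paper intends (the paper itself gives no proof of Theorem~\ref{thm:2code} beyond the remark that it follows ``with similar arguments used for light codes''): Corollary~\ref{cor:2code} for the lower bounds in cases \emph{i)} and \emph{ii)}, a transplant of the Lemma~\ref{lem:lightdi} argument for case \emph{iii)} (which is valid, since a $(2,\lb 0,r \rb)$-code is a light $r$-code and the block lemma before Corollary~\ref{cor:2code} plays the role of Lemma~\ref{lem:bloc2}), and padded versions of the periodic pattern of Lemma~\ref{lem:2codeex} for the upper bounds. The one piece you leave unverified --- that the padded patterns near the residual arc still admit two radii per vertex --- is precisely the detail the paper also omits, and you correctly identify it as the remaining work.
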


\section{Perspectives}

Section \ref{sec:2radius} suggests the following definition that will generalize all the previous ones:

\begin{definition}
Let $p$ be an integer and $\mathcal{R}$ be a set of non-negative integers.
A \emph{$(p,\mathcal{R})$-identifying code} of a graph $G=(V,E)$ is a subset $C$ of $\;V$ such that:
\begin{align*}
\text{(domination)}& \quad \forall x \in V, \exists r \in \mathcal{R}, B_r(x)\cap C \neq \emptyset\\
\text{(identification)} & \quad \begin{cases} 
                                   \forall x \in V, \exists R_x \subset \mathcal{R}, |R_x|\leq p, \forall y \in V, y\neq x, \exists r_{xy} \in R_x \text{ s.t.:}\\
                                   B_{r_{xy}}(x)\cap~C \neq  B_{r_{xy}}(y)\cap C
                                 \end{cases}
\end{align*}

\end{definition}

Integer $p$ corresponds to the number of radii we can assign to a vertex to separate it 
from all the 
others whereas the set $\mathcal{R}$ denotes the set of radii we can use.
This definition unifies all the previous ones: 
a $r$-identifying code is a $(1,\{r\})$-identifying code, a weak $r$-code is a
$(1,\lb 0,r \rb)$-identifying code, a light $r$-code is a $(r+1,\lb 0,r \rb)$-identifying code, a 
$r$-locating dominating code is a $(2,\{0,r\})$-identifying code.

\vspace{1em}

Proposition \ref{prop:light3} is equivalent to say that every $(p,\lb 0,r \rb)$-code in a cycle, with $p\geq 3$ 
is a $(3,\lb 0,r \rb)$-identifying code. Section \ref{sec:2radius} and Section \ref{sec:weak}
consider $(2,\lb 0,r \rb)$-identifying codes and $(1,\lb 0,r \rb)$-identifying codes of the cycle, respectively.
Hence we solved the problem of finding an optimum $(p,\lb 0,r \rb)$-identifying code (for any $p$) in a cycle.
However, the general problem of finding an optimum  $(p,\mathcal{R})$-identifying codes in the cycle is still unknown.

\bibliographystyle{plain}
\bibliography{DGMP09}

\end{document}